\documentclass[journal]{IEEEtran}

\usepackage{amsmath,amssymb,amsfonts,amsthm}
\usepackage{booktabs}
\usepackage{graphicx}
\usepackage{multirow}
\usepackage{array}
\usepackage{tikz}
\usepackage{cite}
\usepackage{xcolor}
\usepackage{url}
\usepackage{algorithm}
\usepackage{algpseudocode}
\usetikzlibrary{arrows.meta,positioning,calc,backgrounds}
\usepackage[hidelinks]{hyperref}

\newtheorem{lemma}{Lemma}

\title{GenAI-FDIA: Physics-Informed Generative Models for False Data Injection Attacks}

\author{Mohammad A. Razzaque,~\IEEEmembership{Senior Member,~IEEE,}
        and~Muta Tah~Hira%
\thanks{Mohammad A. Razzaque (Associate Professor, Research), School of Computing, Engineering and Digital Technologies, Teesside University, UK, E-mail: m.razzaque@tees.ac.uk.}%
\thanks{Muta Tah Hira, Smartifier Ltd, Stockton-on-Tees, UK, E-mail: mhira@smartifier.co.uk.}%
\thanks{Manuscript received April 27, 2026; revised, 2026.}}

\begin{document}
\maketitle

%% -- Abstract ---------------------------------------------------------------
\begin{abstract}
Training and evaluating false data injection attack (FDIA) detectors for power systems is constrained by data scarcity. Operational grid measurements are commercially sensitive, and hand-crafted attacks fail to capture the distributional structures imposed by network physics. We present \textsc{GenAI-FDIA}, a  novel framework benchmarking a pool of $P{=}20$ architectures for physics-compliant FDIA synthesis, spanning Wasserstein GANs (Generative Adversarial Networks), MMD-VAEs (Maximum Mean Discrepancy Variational Autoencoders), normalising flows, diffusion models, and cross-family hybrids. These are evaluated across three IEEE testbeds (14-bus DC, 30-bus DC, and 14-bus AC) under a 60/20/20 chronological split using data-driven Bad Data Detection (BDD) threshold calibration. All architectures achieve evasion rates of $\epsilon_{\text{BDD}} \ge 86.6\%$ on the 14-bus network, and limiting an attacker's topological knowledge induces a measurable degradation in stealthiness ($p \le 0.0022$). We identify a previously unreported failure mode; applying affine physics projections in normalised feature spaces critically displaces the attack vector, collapsing BDD evasion from ${\sim}55\%$ to $<\!2\%$ on the 30-bus testbed. We resolve this via an inference-time harmoniser, restoring full stealthiness ($\epsilon_{\text{BDD}}{=}100\%$) across all physics-informed variants without retraining. We further isolate a covariance-collapse phenomenon ($\kappa \approx -0.076$) within hybrid architectures and rectify it through 50-epoch warm-up schedules ($\kappa \to 0.785$, $\Delta\text{MMD}={-}3.1\%$). \textsc{GenAI-FDIA} delivers a recovery blueprint applicable to any physics-constrained generative model deployed for power-system security.
\end{abstract}

\begin{IEEEkeywords}
False data injection attacks, generative models, variational autoencoders,
diffusion models, physics-informed generation, power-system security,
explainability.
\end{IEEEkeywords}

%% ==========================================================================
\section{Introduction}
\label{sec:introduction}

Modern power grids are simultaneously decarbonising, digitalising, and decentralising. Renewable generation now contributes a significant fraction of electricity supply across major regions \cite{irena2024renewables} and is supported by distributed energy resources, storage, and vehicle-to-grid infrastructure. The resulting cyber-physical interdependencies have widened the attack surface beyond what conventional security systems were designed to monitor. Recent incidents, the BlackEnergy-driven distribution outages in Ukraine in 2015 \cite{lee2016ukrainegrid}, the INDUSTROYER2 deployment against Ukrainian high-voltage substations in 2022 \cite{eset2022industroyer2}, and the Volt Typhoon pre-positioning operations disclosed inside US critical infrastructure \cite{cisa2024volttyphoon}, demonstrate that energy infrastructure is now an explicit, well-resourced target.

State-estimation pipelines are particularly exposed. False Data Injection Attacks (FDIAs) \cite{liu2011fdia} exploit the algebraic structure of the linear state-estimation model by constructing injection vectors $\mathbf{c}\in\operatorname{col}(\mathbf{H})$, where $\mathbf{H}$ is the measurement Jacobian, so that an arbitrary bias is added to the estimated bus-voltage angles while the BDD residual remains unchanged. Undetected, such manipulation can drive suboptimal generation dispatch, suppress protective-relay operation, or mask incipient faults \cite{tian2022afdias}.

Detector quality depends directly on the diversity of attack data on which detectors are trained. Real attacks are rare and concealed; controlled testbed experiments cover only a small fraction of the feasible attack space, and the formal enumeration of all BDD-bypassing vectors scales poorly beyond small bus systems \cite{shahriar2022iattackgen}. Generative AI models address this scarcity by learning the statistical distribution of feasible attacks. Conditional Wasserstein GANs trained on formally stealthy corpora \cite{shahriar2022iattackgen} reach $>$90\% BDD evasion on the 14-bus benchmark. Subsequent work specialised the GAN paradigm with sparse adversarial and ResNet--FCN generators \cite{li2023sparsegan}, least-squares GANs \cite{liu2024lsganfdia}, autoencoder-initialised generators for time-varying topologies \cite{huang2022sparseae}, time-series-coherent GANs \cite{hong2023tsganfdia}, and attention--diffusion hybrids \cite{li2024advdiffusionfdia}. Another set of work embeds AC or DC power-flow constraints directly into the synthesis mechanism, including extrapolative adversarial VAEs \cite{zhao2025piexavae}, physics-constrained GANs \cite{jing2025physconstrainedgan}, and physics-informed denoising diffusion \cite{zhang2024piddpm}. ADMM-based synthesisers jointly evade multi-label locational detectors \cite{tian2026admm}, and high-renewable scenarios expand the feasible attack space \cite{liu2025spatiotemporal}. On the detection side, sequential LSTM-autoencoders \cite{musleh2023lstmfdia,xu2023ddetmtd}, graph-convolutional attention networks \cite{xia2024gcat}, transformer--LSTM hybrids \cite{baul2023xtm}, and adversarial anomaly transformers \cite{liu2025adtrans} now raise the empirical evasion bar.

Despite this progress, three compounding limitations (L) restrict the practical utility of existing pipelines. \textbf{(L1)} Each published framework commits to a single generative architecture, with no mechanism for selecting the architecture best suited to a given topology, dataset size, or measurement representation; the across-family variance has not been characterised on a common protocol. Nawaz \emph{et al.}~\cite{nawaz2025gangridfdia} report the only prior multi-architecture comparison, but only across two GAN variants on a single testbed. \textbf{(L2)} Attacker topology knowledge is treated as a binary condition rather than the continuous spectrum along which real adversaries move \cite{li2020conaml}. \textbf{(L3)} Generated attacks are rarely subjected to attribution analysis that would identify which sensors are preferentially manipulated, information that is operationally actionable for sensor hardening. Furthermore, prior physics-informed approaches apply the projection in normalised feature space without analysing the geometric interaction with feature normalisation, a gap that, as we show below, masks a structural failure mode.

This paper presents \textsc{GenAI-FDIA}, addressing these limitations through four integrated contributions:
\begin{enumerate}
  \item \textbf{Multi-model pool with Pareto selection:} Twenty generative architectures from four families (WGAN, MMD-VAE, normalising flow, diffusion) are evaluated on each testbed under a unified protocol. The pool contains four novel cross-family hybrids: \emph{MMD-VAE-WGAN} (augments a Wasserstein autoencoder with an adversarial discriminator), \emph{LSTM-MMD-VAE-WGAN} (prepends a temporal LSTM encoder), \emph{TC-MMD-VAE} (jointly enforces MMD and total-correlation latent disentanglement), and \emph{RealNVP-TC-MMD-VAE} (appends a normalising flow for exact likelihood). Pareto-front selection over fidelity, BDD evasion, physics consistency, and normalised attack impact identifies the non-dominated subset.
  \item \textbf{Physics-informed projection with formal failure analysis:} Each pool member has a physics-informed (PI-) variant whose generator output passes through a constraint-satisfaction projection. We provide a geometric lemma showing that applying this projection in a standard-scaler space displaces the projected sample outside $\operatorname{col}(\mathbf{H})$ whenever the training-data mean lies outside that subspace, and we provide an inference-time harmoniser that resolves the failure without retraining.
  \item \textbf{Continuous attacker-knowledge conditioning:} A scalar $k\!\in\![0,1]$ governs the fraction of $\mathbf{H}$ rows revealed to the generation process, mapping topology-agnostic injection ($k{=}0$, MITRE ATT\&CK T0817 \cite{MITRE}) through partial-knowledge reconnaissance ($k{=}0.5$, T0842) to fully topology-aware synthesis ($k{=}1$, T0843).
  \item \textbf{Cross-level explainability:} Data-level and model-level attribution are applied to Pareto-optimal models, and their cross-level sensor-ranking consistency is quantified by Cohen's~$\kappa$.
\end{enumerate}

The evaluation is conducted on three IEEE benchmark testbeds: 14-bus DC (2{,}000 samples) \cite{shahriar2022iattackgen}, 30-bus DC (2{,}000 samples), and 14-bus AC with 68-dimensional phasor measurements (35{,}130 samples) \cite{xu2023ddetmtd}. Detectors are BDD (legacy residual test) and LSTM-AE \cite{xu2023ddetmtd}; Isolation Forest \cite{liu2008iforest} provides an additional unsupervised probe on the 14-bus DC testbed. Direct numerical comparison against concurrent diffusion-based FDIA frameworks~\cite{li2024advdiffusionfdia,qin2026false} is precluded because their code is not publicly available; the comparison is made at the design level. The remainder of the paper is organised as follows. Section~\ref{sec:methodology} states the threat model, the conditional generative objective, the physics-projection operator together with its formal failure-mode lemma, and the evaluation protocol. Section~\ref{sec:results} reports results across the three testbeds with discussion inlined per subsection. Section~\ref{sec:conclusion} summarises the findings.

%% ==========================================================================
\section{Methodology}
\label{sec:methodology}

\subsection{State estimation, BDD, and threat model}
\label{ssec:bg_se}

Power-system state estimation recovers bus-voltage angles $\hat{\mathbf{x}}\!\in\!\mathbb{R}^{N-1}$ from measurements $\mathbf{z}\!\in\!\mathbb{R}^{M}$ using the linear DC model $\mathbf{z}=\mathbf{H}\mathbf{x}+\boldsymbol{\varepsilon}$, $\boldsymbol{\varepsilon}\!\sim\!\mathcal{N}(\mathbf{0},\mathbf{R})$, with weighted least-squares solution $\hat{\mathbf{x}}=(\mathbf{H}^{\!\top}\mathbf{R}^{-1}\mathbf{H})^{-1}\mathbf{H}^{\!\top}\mathbf{R}^{-1}\mathbf{z}$ \cite{FDIA2016}. BDD accepts the measurement set when the weighted residual
\begin{equation}
  J(\mathbf{z}) = (\mathbf{z}-\mathbf{H}\hat{\mathbf{x}})^{\!\top}\mathbf{R}^{-1}(\mathbf{z}-\mathbf{H}\hat{\mathbf{x}})
  \label{eq:bdd}
\end{equation}
falls below a $\chi^{2}$ threshold $\tau$. Following \cite{liu2011fdia}, an FDIA $\mathbf{c}=\mathbf{H}\boldsymbol{\delta}\!\in\!\operatorname{col}(\mathbf{H})$ biases the state estimate by $\boldsymbol{\delta}$ while leaving the residual unchanged. Components in $\operatorname{col}(\mathbf{H})^{\perp}=\ker(\mathbf{H}^{\!\top})$ excite non-zero residual energy and are detectable. For 14-bus DC, $\mathbf{H}\!\in\!\mathbb{R}^{54\times 13}$ partitions the measurement space into a 13-dimensional stealthy subspace and a 41-dimensional detectable complement; for 30-bus DC, $\mathbf{H}\!\in\!\mathbb{R}^{71\times 29}$ yields a 29-dimensional stealthy subspace.

The adversary issues an injection $\mathbf{c}\!\in\!\mathbb{R}^{M}$ producing $\mathbf{z}_{a}=\mathbf{z}+\mathbf{c}$. We parameterise the attacker by a scalar $k\!\in\![0,1]$ controlling the fraction of $\mathbf{H}$-rows revealed, assigned to MITRE ATT\&CK ICS sub-techniques as $k{=}0\!\to$T0817 (drive-by compromise), $k{=}0.5\!\to$T0842 (network sniffing), $k{=}1\!\to$T0843 (program download) \cite{MITRE}. The objective is to learn a conditional generator $G_{\theta}\!:\mathcal{Z}\!\times\!\mathbb{R}^{c_{\mathrm{dim}}}\!\to\!\mathbb{R}^{M}$ approximating $p(\mathbf{c}\,|\,\mathbf{H}_{k})$, where $\mathbf{H}_{k}$ encodes the $k$-fraction of $\mathbf{H}$ that the adversary observes.

\subsection{Block-wise framework and chronological partition}
\label{ssec:framework}

\begin{figure}[!t]
\centering
\includegraphics[width=\columnwidth]{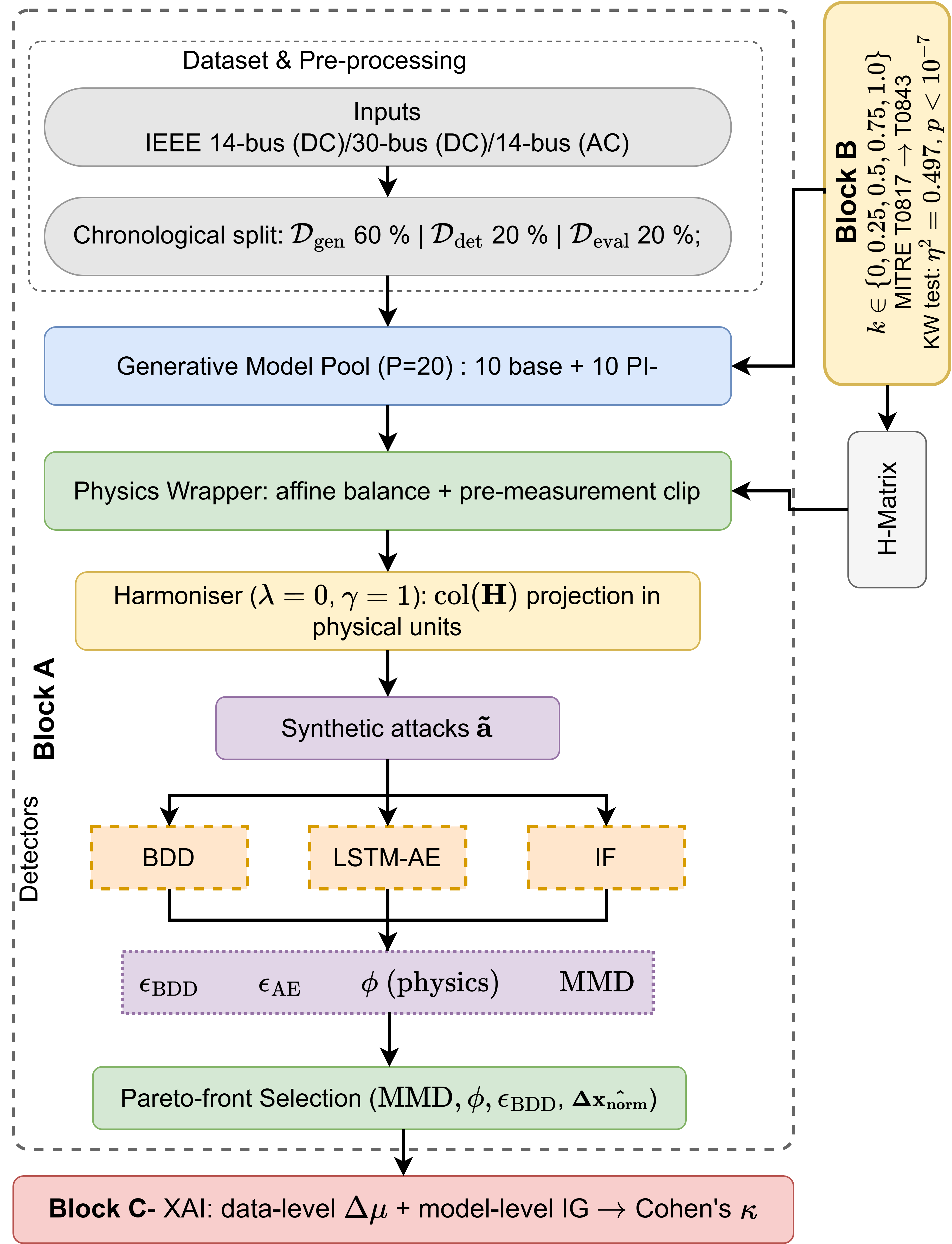}
\caption{GenAI-FDIA evaluation pipeline. The pool ($P{=}20$) trains on $\mathcal{D}_{\mathrm{gen}}$; PI- variants pass outputs through the \textsc{PhysicsWrapper} and inference-time harmoniser. Block~A selects Pareto-optimal models over MMD, $\epsilon_{\mathrm{BDD}}$, $\phi$, and $\Delta\hat{x}_{\mathrm{norm}}$. Block~B re-evaluates a representative per-family triplet over five attacker-knowledge levels $k$. Block~C applies cross-level XAI attribution to Pareto-optimal configurations.}
\label{fig:meth_pipeline}
\end{figure}

The evaluation is divided into three blocks (Fig.~\ref{fig:meth_pipeline}). Block~A benchmarks the entire pool and applies Pareto-front selection. Block~B characterises the effect of attacker knowledge $k$ on a representative per-family triplet (one base, one PI- variant, one cross-family PI- hybrid) drawn from each testbed; this design isolates knowledge-induced variance from family-induced variance, which a strict Pareto-optimal subset (typically dominated by a single family per testbed) cannot. Block~C applies cross-level XAI attribution to Pareto-optimal configurations. To prevent the inflation of evasion rates that occurs when generative training, detector training, and evasion evaluation share data, every dataset is partitioned chronologically (\texttt{shuffle=false}, \texttt{seed=42}) into three non-overlapping segments: 60\% $\mathcal{D}_{\mathrm{gen}}$, 20\% $\mathcal{D}_{\mathrm{det}}$, and 20\% $\mathcal{D}_{\mathrm{eval}}$.

%% ── Figure 1 (near Section B text) ──────────────────────────────────────
\begin{figure*}[!t]
\centering
\resizebox{0.85\textwidth}{!}{%
\begin{tikzpicture}[
  font=\footnotesize,
  >=Stealth,
  every path/.style={semithick},
  nd/.style={draw=black!70, rectangle, rounded corners=2pt,
             minimum width=1.10cm, minimum height=0.44cm,
             inner sep=2.5pt, align=center},
  nlat/.style={nd, fill=gray!20},
  ngen/.style={nd, fill=blue!16},
  nenc/.style={nd, fill=teal!18},
  ndec/.style={nd, fill=cyan!14},
  ndis/.style={nd, fill=orange!18},
  nout/.style={nd, fill=green!14},
  npen/.style={nd, fill=red!10, draw=red!55, densely dashed},
  farr/.style={->},
  darr/.style={->, densely dashed, gray!60},
]

%% ── Panel (a): WGAN / LSTM-WGAN ──────────────────────────────────────────
\node[nlat]                       (az)    at (0.00, 0.85) {$\mathbf{z}$};
\node[nenc]                       (alstm) at (1.70, 0.85) {LSTM};
\node[ngen]                       (aG)    at (3.40, 0.85) {$G_\theta$};
\node[nout]                       (agen)  at (5.10, 0.85) {$\tilde{\mathbf{a}}$};
\node[npen, minimum width=1.75cm] (aW)    at (1.70, 0.00) {$\mathcal{W}(p_r,p_g)$};
\node[ndis]                       (aD)    at (3.40, 0.00) {$D_\phi$};
\node[nlat]                       (areal) at (5.10, 0.00) {$\mathbf{a}{\sim}p_r$};

\draw[farr] (az)         -- (alstm);
\draw[farr] (alstm)      -- (aG);
\draw[farr] (aG)         -- (agen);
\draw[farr] (agen.south) -- ++(0,-0.15) -| (aD.north);
\draw[farr] (areal)      -- (aD);
\draw[farr] (aD)         -- (aW);

\node[font=\scriptsize\itshape, anchor=north] at (2.55,-0.40)
  {(a)~WGAN\,/\,LSTM-WGAN};
\begin{pgfonlayer}{background}
  \draw[draw=gray!30, dashed, rounded corners=4pt, line width=0.4pt]
    (-0.80,-0.62) rectangle (5.90, 1.24);
\end{pgfonlayer}

%% ── Panel (b): MMD-VAE / TC-MMD-VAE  [xshift = 8.5 cm] ──────────────────
\begin{scope}[xshift=6.5cm]
  \node[nlat]                       (ba)   at (0.00, 0.85) {$\mathbf{a}$};
  \node[nenc]                       (benc) at (1.50, 0.85) {Enc\,$q_\phi$};
  \node[nlat]                       (bz)   at (3.00, 0.85) {$\mathbf{z}$};
  \node[ndec]                       (bdec) at (4.50, 0.85) {Dec\,$p_\theta$};
  \node[nout]                       (bgen) at (6.00, 0.85) {$\tilde{\mathbf{a}}$};
  \node[npen, minimum width=1.55cm] (bmmd) at (1.50, 0.00) {$\operatorname{MMD}(q\Vert p)$};
  \node[npen, minimum width=1.55cm] (btc)  at (3.50, 0.00) {TC\;$\beta$-penalty};

  \draw[farr] (ba)         -- (benc);
  \draw[farr] (benc)       -- (bz);
  \draw[farr] (bz)         -- (bdec);
  \draw[farr] (bdec)       -- (bgen);
  \draw[darr] (benc.south) -- (bmmd.north);
  \draw[darr] (bz.south)   -- ++(0,-0.12) -| (btc.north);

  \node[font=\scriptsize\itshape, anchor=north] at (3.00,-0.40)
    {(b)~MMD-VAE\,/\,TC-MMD-VAE};
  \begin{pgfonlayer}{background}
    \draw[draw=gray!30, dashed, rounded corners=4pt, line width=0.4pt]
      (-0.80,-0.62) rectangle (6.80, 1.24);
  \end{pgfonlayer}
\end{scope}
%% ── Panel (c): Normalising Flow (RealNVP / FlowTS)  [yshift = -2.95 cm] ───
\begin{scope}[yshift=-2.95cm]
  \node[nlat, minimum width=1.55cm] (cz) at (0.00, 0.60)
    {$\mathbf{z}{\sim}\mathcal{N}(\mathbf{0},\mathbf{I})$};
  \node[ngen, minimum width=2.10cm] (cf) at (2.65, 0.60)
    {$f_\theta^{-1}$\,(RealNVP)};
  \node[nout]                       (ca) at (5.10, 0.60) {$\mathbf{a}$};

  \draw[farr] (cz) -- (cf);
  \draw[farr] (cf) -- (ca);
  %% dashed likelihood relation: vertical → horizontal → vertical
  \path (ca.north) ++(0,0.42) coordinate (ctop);
  \path (cz.north) ++(0,0.42) coordinate (cztop);
  \draw[darr, shorten <=2pt, shorten >=2pt]
    (ca.north) -- (ctop)
    -- node[above, font=\tiny]
         {$\log p_{\mathbf{a}} = \log p_{\mathbf{z}} + \log|\!\det J_\theta|$}
       (cztop)
    -- (cz.north);

  \node[font=\scriptsize\itshape, anchor=north] at (2.55,-0.10)
    {(c)~Normalising Flow (RealNVP\,/\,FlowTS)};
  \begin{pgfonlayer}{background}
    \draw[draw=gray!30, dashed, rounded corners=4pt, line width=0.4pt]
      (-0.80,-0.32) rectangle (5.90, 1.65);
  \end{pgfonlayer}
\end{scope}
%% ── Panel (d): DDPM  [xshift = 8.5 cm, yshift = -2.95 cm] ───────────────
\begin{scope}[xshift=6.5cm, yshift=-2.95cm]
  \node[nlat]                       (da0f)  at (0.00, 0.90) {$\mathbf{a}_0$};
  \node[nd, fill=blue!10, minimum width=1.65cm] (dq) at (1.90, 0.90)
    {$q(\mathbf{a}_t|\mathbf{a}_{t-1})$};

  \node[font=\footnotesize]         (ddots) at (3.75, 0.90) {$\cdots$};
  \node[nlat]                       (daT)   at (5.10, 0.90) {$\mathbf{a}_T$};
  \node[ngen, minimum width=2.00cm] (deps)  at (3.20, 0.05)
    {$\epsilon_\theta(\mathbf{a}_t,t)$};
  \node[nout]                       (da0r)  at (0.90, 0.05) {$\hat{\mathbf{a}}_0$};

  \draw[farr] (da0f)  -- (dq);
  \draw[farr] (dq)    -- (ddots);
  \draw[farr] (ddots) -- (daT);
  %% dashed dependency: run horizontally on a lower rail and enter $\epsilon_\theta$ horizontally
  \path (daT.south) ++(0,-0.22) coordinate (dbar);
  \path (deps.east) ++(0.35,0) coordinate (depsin);
  \path (dbar -| depsin) coordinate (depsbar);
  \draw[darr] (daT.south) -- (dbar) -- (depsbar) -- (depsin) -- (deps.east);
  \draw[farr] (deps)  -- (da0r);

  \node[font=\tiny, above=0.04cm of dq]   {forward\;$q$};
  \node[font=\tiny, below=0.04cm of deps] {reverse\;$p_\theta$};

  \node[font=\scriptsize\itshape, anchor=north] at (2.75,-0.35) {(d)~DDPM};
  \begin{pgfonlayer}{background}
    \draw[draw=gray!30, dashed, rounded corners=4pt, line width=0.4pt]
      (-0.80,-0.58) rectangle (6.00, 1.60);
  \end{pgfonlayer}
\end{scope}

\end{tikzpicture}%
}% end resizebox

\caption{%
  Generative model families in the 20-member pool (base variants shown;
  PI- variants add a \textsc{PhysicsWrapper} at the output).
  \textbf{(a)}~WGAN\,/\,LSTM-WGAN: latent noise passes through an optional
  LSTM encoder before generator $G_\theta$; discriminator $D_\phi$ minimises
  the Wasserstein-1 distance $\mathcal{W}(p_r,p_g)$.
  \textbf{(b)}~MMD-VAE\,/\,TC-MMD-VAE: encoder $q_\phi$ is regularised by an
  MMD penalty; the TC variant adds a $\beta$-weighted total-correlation term
  (dashed) to encourage latent factorisation.
  \textbf{(c)}~Normalising flow (RealNVP\,/\,FlowTS): generation uses the
  inverse map $f_\theta^{-1}$; training (dashed arc) optimises exact
  log-likelihood via the change-of-variables formula.
  \textbf{(d)}~DDPM: a forward diffusion process $q$ corrupts
  $\mathbf{a}_0$ to $\mathbf{a}_T\!{\sim}\mathcal{N}(\mathbf{0},\mathbf{I})$;
  denoising network $\epsilon_\theta$ learns the reverse process $p_\theta$.%
}
\label{fig:family_arch}
\end{figure*}

\subsection{Generative pool}
\label{ssec:pool}

The pool comprises $P{=}20$ models: ten base architectures and their PI- counterparts. The bases span four families, and Figure~\ref{fig:family_arch} displays the data-flow diagrams of these four
generative families. \emph{WGAN/LSTM-WGAN}: a generator $G_\theta\!:\mathcal{Z}\!\to\!\mathbb{R}^{M}$ maps latent noise to synthetic samples; the discriminator minimises the Earth-Mover distance with a gradient penalty \cite{gulrajani2017wgangp}; LSTM-bearing variants prepend a stacked LSTM encoder. \emph{MMD-VAE}: the WAE-MMD variant \cite{tolstikhin2018wae} replaces the KL divergence with $\operatorname{MMD}(q(\mathbf{z}),p(\mathbf{z}))$; the total-correlation variant adds $\beta\,\mathrm{KL}\!\left(q(\mathbf{z})\,\|\,\prod_i q(z_i)\right)$ to encourage latent factorisation \cite{chen2018tcvae}, which, when applied from epoch zero, severs the negative latent correlations $\mathbf{c}=\mathbf{H}\boldsymbol{\delta}$ requires, producing the covariance collapse mode resolved in Section~\ref{ssec:beta_warmup}. \emph{Normalising flows}: invertible transformations map a Gaussian base to the attack distribution via $\mathbf{a}=f_\theta(\mathbf{z})$ with exact log-likelihood \cite{dinh2017realnvp}; FlowTS uses a rectified-flow formulation for time series \cite{hu2025flowts}. \emph{DDPM}: a forward process adds Gaussian noise in $T$ steps, and a network $\epsilon_\theta$ is trained to reverse it \cite{ho2020ddpm}. \emph{MMD-VAE-WGAN}, \emph{LSTM-MMD-VAE-WGAN}, \emph{TC-MMD-VAE}, and \emph{RealNVP-TC-MMD-VAE} are novel cross-family hybrids. The MMD encoder regulariser was chosen over KL because the latter is prone to uninformative latent codes that undermine physical plausibility \cite{tolstikhin2018wae}. Per-family hyperparameters use Adam with $\beta_1{=}0.5,\beta_2{=}0.999$, learning rates in $\{1{-}5\}{\times}10^{-4}$, batches 64--256, and 150--300 epochs depending on the family.

\subsection{Physics-constraint projection and failure-mode lemma}
\label{ssec:physics}

Each PI- model passes its raw output $\tilde{\mathbf{c}}$ through a two-stage \textsc{PhysicsWrapper}. Stage one enforces the DC power-balance constraint $\mathbf{1}^{\!\top}\mathbf{c}=P_{\mathrm{loss}}$ via the affine projection $\mathbf{c}'=\tilde{\mathbf{c}}-\frac{1}{M}(\mathbf{1}^{\!\top}\tilde{\mathbf{c}}-P_{\mathrm{loss}})\mathbf{1}$. Stage two clips each component $c'_{i}$ to the per-measurement voltage window $[\underline{V}_{i}\!-\!z_{i},\overline{V}_{i}\!-\!z_{i}]$. The fraction of samples satisfying both constraints after projection defines the physics-consistency metric $\phi(\mathbf{c})$. A failure mode arises when the same projection is applied in normalised feature space (after \textsc{StandardScaler}); inverse-scaling adds the training-data mean $\boldsymbol{\mu}$ to the projected vector.

\begin{lemma}[Mean leakage out of $\operatorname{col}(\mathbf{H})$]
\label{lem:mean_leak}
Let $\mathbf{P}_{\!\mathbf{H}}$ denote the orthogonal projector onto $\operatorname{col}(\mathbf{H})$, $\boldsymbol{\sigma}\!>\!\mathbf{0}$ the per-feature standard deviation, and $\mathcal{S}(\mathbf{a})=\boldsymbol{\sigma}\!\odot\!\mathbf{a}+\boldsymbol{\mu}$ the inverse \textsc{StandardScaler}. Suppose $\tilde{\mathbf{a}}\!\in\!\operatorname{col}(\mathbf{H})$ in normalised space. Then $\mathbf{a}_{p}=\mathcal{S}(\tilde{\mathbf{a}})$ satisfies
\begin{equation}
  (\mathbf{I}-\mathbf{P}_{\!\mathbf{H}})\,\mathbf{a}_{p} = (\mathbf{I}-\mathbf{P}_{\!\mathbf{H}})\,\boldsymbol{\mu} + (\mathbf{I}-\mathbf{P}_{\!\mathbf{H}})\,(\boldsymbol{\sigma}\!\odot\!\tilde{\mathbf{a}}),
  \label{eq:mean_leak}
\end{equation}
which is non-zero whenever $\boldsymbol{\mu}\!\notin\!\operatorname{col}(\mathbf{H})$ or $\boldsymbol{\sigma}\!\odot\!\tilde{\mathbf{a}}\!\notin\!\operatorname{col}(\mathbf{H})$. Then $J(\mathbf{a}_{p})=\mathbf{a}_{p}^{\!\top}\mathbf{R}^{-1}(\mathbf{I}-\mathbf{P}_{\!\mathbf{H}})\mathbf{a}_{p}>0$ in expectation, so the projected attack is no longer stealthy.
\end{lemma}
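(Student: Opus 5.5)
The plan is a direct linear-algebra computation followed by a short argument about the residual. First, I expand $\mathbf{a}_{p}=\mathcal{S}(\tilde{\mathbf{a}})=\boldsymbol{\sigma}\odot\tilde{\mathbf{a}}+\boldsymbol{\mu}$. Since $\mathbf{I}-\mathbf{P}_{\!\mathbf{H}}$ is linear, applying it to both sides gives \eqref{eq:mean_leak} immediately.

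The substantive observation comes next. The hypothesis $\tilde{\mathbf{a}}\in\operatorname{col}(\mathbf{H})$ does not imply $\boldsymbol{\sigma}\odot\tilde{\mathbf{a}}\in\operatorname{col}(\mathbf{H})$. The Hadamard product is multiplication by $\operatorname{diag}(\boldsymbol{\sigma})$, and $\operatorname{col}(\mathbf{H})$ is invariant under this diagonal map only in special cases, for instance when $\boldsymbol{\sigma}$ is a constant vector. So I will write $\mathbf{D}=\operatorname{diag}(\boldsymbol{\sigma})$ and record that the second term is $(\mathbf{I}-\mathbf{P}_{\!\mathbf{H}})\mathbf{D}\tilde{\mathbf{a}}$.

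For the non-vanishing claim I have to be careful, because the literal ``whenever'' can fail: the two terms could cancel. Cancellation requires $\mathbf{D}\tilde{\mathbf{a}}+\boldsymbol{\mu}\in\operatorname{col}(\mathbf{H})$. Equivalently, $\tilde{\mathbf{a}}$ must lie in the affine set
\[
\mathcal{A}=\mathbf{D}^{-1}\bigl(\operatorname{col}(\mathbf{H})-\boldsymbol{\mu}\bigr)\cap\operatorname{col}(\mathbf{H}).
\]
If $\boldsymbol{\mu}\notin\operatorname{col}(\mathbf{H})$, or if $\mathbf{D}$ does not map $\operatorname{col}(\mathbf{H})$ into itself, then $\mathcal{A}$ is a proper affine subset of $\operatorname{col}(\mathbf{H})$. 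It is either empty or of strictly lower dimension. Its Lebesgue measure within $\operatorname{col}(\mathbf{H})$ is therefore zero, so for any generator output distribution with a density on $\operatorname{col}(\mathbf{H})$ the leakage is non-zero almost surely. I will state the result in this generic/almost-sure form and note that the deterministic phrasing holds outside the exceptional set $\mathcal{A}$.

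Finally, I turn to the residual. In \eqref{eq:bdd} the residual operator is $\mathbf{I}-\mathbf{K}$ with $\mathbf{K}=\mathbf{H}(\mathbf{H}^{\!\top}\mathbf{R}^{-1}\mathbf{H})^{-1}\mathbf{H}^{\!\top}\mathbf{R}^{-1}$. This is the $\mathbf{R}^{-1}$-orthogonal projector onto $\operatorname{col}(\mathbf{H})$. It coincides with $\mathbf{P}_{\!\mathbf{H}}$ when $\mathbf{R}=r\mathbf{I}$. In general it has the same kernel of $\mathbf{I}-\mathbf{K}$, namely $\operatorname{col}(\mathbf{H})$. Hence $J(\mathbf{a}_p)=0$ if and only if $\mathbf{a}_p\in\operatorname{col}(\mathbf{H})$, which is exactly the condition excluded above. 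I will use $\mathbf{R}^{-1}$-idempotence to write $J=\mathbf{a}_p^{\!\top}\mathbf{R}^{-1}(\mathbf{I}-\mathbf{K})\mathbf{a}_p$, matching the displayed formula under the isotropic convention.

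For the expectation statement, I treat $\tilde{\mathbf{a}}$ as random with mean $\mathbf{m}$ and covariance $\boldsymbol{\Sigma}$. Write $\mathbf{Q}=(\mathbf{I}-\mathbf{K})^{\!\top}\mathbf{R}^{-1}(\mathbf{I}-\mathbf{K})\succeq 0$. Then
\[
\mathbb{E}[J]=(\mathbf{D}\mathbf{m}+\boldsymbol{\mu})^{\!\top}\mathbf{Q}(\mathbf{D}\mathbf{m}+\boldsymbol{\mu})+\operatorname{tr}(\mathbf{Q}\mathbf{D}\boldsymbol{\Sigma}\mathbf{D}).
\]
This is strictly positive unless both the mean term and the covariance term vanish. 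That would require the mean leak to cancel and the support of $\mathbf{D}\tilde{\mathbf{a}}$ to lie in $\operatorname{col}(\mathbf{H})$, which is the non-generic case already isolated.

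I expect the main obstacle to be this precision issue rather than the algebra. The statement conflates the orthogonal projector with BDD's weighted projector, and it asserts non-vanishing under an ``or'' condition that admits cancellation. I will address both explicitly, through the shared-kernel argument and the measure-zero exceptional set, rather than prove the statement verbatim.
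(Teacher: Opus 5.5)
Your argument is correct. Its algebraic core is the same as the paper's: expand $\mathcal{S}$ and apply the linear map $\mathbf{I}-\mathbf{P}_{\!\mathbf{H}}$. Where you differ is in how you justify non-vanishing and the loss of stealth, and your route is sharper.

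The paper's sketch argues term by term. The $\boldsymbol{\sigma}$-term vanishes only if $\operatorname{diag}(\boldsymbol{\sigma})$ commutes with $\mathbf{P}_{\!\mathbf{H}}$, which it says requires $\boldsymbol{\sigma}=\sigma\mathbf{1}$. The $\boldsymbol{\mu}$-term vanishes only if $\boldsymbol{\mu}\in\operatorname{col}(\mathbf{H})$. It then concludes that ``neither condition holds generically.'' It never considers cancellation between the two terms, and you are right that this defeats the literal ``whenever.'' Take $\mathbf{H}=(1,1)^{\top}$, $\boldsymbol{\sigma}=(1,2)^{\top}$, $\boldsymbol{\mu}=(1,0)^{\top}$, $\tilde{\mathbf{a}}=(1,1)^{\top}$. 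Neither $\boldsymbol{\mu}$ nor $\boldsymbol{\sigma}\odot\tilde{\mathbf{a}}$ lies in $\operatorname{col}(\mathbf{H})$, yet $\mathbf{a}_{p}=(2,2)^{\top}\in\operatorname{col}(\mathbf{H})$. Your exceptional affine set $\mathcal{A}$ turns the paper's informal ``generically'' into a precise almost-sure statement. Your ``for instance'' is also more accurate than the paper's ``requires'': $\operatorname{col}(\mathbf{H})$ is $\operatorname{diag}(\boldsymbol{\sigma})$-invariant whenever it decomposes along the coordinate blocks on which $\boldsymbol{\sigma}$ is constant.

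On the residual side, the paper simply asserts $J=\mathbf{a}_{p}^{\top}\mathbf{R}^{-1}(\mathbf{I}-\mathbf{P}_{\!\mathbf{H}})\mathbf{a}_{p}$. This equals the statistic in \eqref{eq:bdd} only for isotropic $\mathbf{R}$. For a general diagonal $\mathbf{R}$ that quadratic form can even be negative: in the same toy system, $\mathbf{R}=\operatorname{diag}(1,0.1)$ and $\mathbf{a}=(1,0.9)^{\top}$ give $-0.4$. Your shared-kernel argument with the weighted projector $\mathbf{K}$ is what actually links the orthogonal-projector leak to $J>0$. Your mean--covariance expansion supplies the ``in expectation'' claim, which the paper leaves unargued; it also follows immediately from $J\ge 0$ together with $J>0$ off the null set $\mathcal{A}$.

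The paper's version offers brevity and a physical reading of the two obstructions: non-uniform scaling, and a mean away from flat start. Yours proves a correct version of the claim, with its hypotheses made explicit: a density on $\operatorname{col}(\mathbf{H})$ and a general $\mathbf{R}$.
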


\textit{Proof sketch:} By assumption $\mathbf{P}_{\!\mathbf{H}}\tilde{\mathbf{a}}=\tilde{\mathbf{a}}$, so $(\mathbf{I}-\mathbf{P}_{\!\mathbf{H}})\tilde{\mathbf{a}}=\mathbf{0}$. Applying $\mathbf{I}-\mathbf{P}_{\!\mathbf{H}}$ to $\mathbf{a}_{p}=\mathcal{S}(\tilde{\mathbf{a}})$ yields \eqref{eq:mean_leak}. The first term vanishes only when elementwise scaling by $\boldsymbol{\sigma}$ commutes with $\mathbf{P}_{\!\mathbf{H}}$, which requires $\boldsymbol{\sigma}=\sigma\mathbf{1}$; the second term vanishes only when $\boldsymbol{\mu}\!\in\!\operatorname{col}(\mathbf{H})$, equivalent to a flat-start reference angle for every snapshot. Neither condition holds generically. $\blacksquare$

In IEEE 30-bus DC, $\|(\mathbf{I}-\mathbf{P}_{\!\mathbf{H}})\boldsymbol{\mu}\|_{2}\!=\!2.81{\times}10^{2}$ and $\mathbb{E}\,J(\mathbf{a}_{p})\!\approx\!1.4{\times}10^{6}$ versus the calibrated threshold $\tau\!=\!3.30{\times}10^{5}$, a ${\sim}4.2\times$ residual inflation that collapses $\epsilon_{\mathrm{BDD}}$ from ${\sim}55\%$ to $<\!2\%$.

\textbf{Inference-time harmoniser:} The remedy is a single-pass projection in physical units: $\hat{\boldsymbol{\delta}}=(\mathbf{H}^{\!\top}\mathbf{H})^{-1}\mathbf{H}^{\!\top}\mathbf{a}_{p}$, $\mathbf{a}_{h}=\mathbf{H}\hat{\boldsymbol{\delta}}$, with no sanitiser threshold and no iterative refinement (Algorithm~\ref{alg:harmoniser}). The harmoniser restores $\epsilon_{\mathrm{BDD}}{=}100\%$ for all ten PI-variants without retraining.

\begin{algorithm}[!t]
\caption{Inference-time stealth harmoniser}
\label{alg:harmoniser}
\begin{algorithmic}[1]
\Require Generated attack $\mathbf{a}_{p}\!\in\!\mathbb{R}^{M}$ in physical units; Jacobian $\mathbf{H}\!\in\!\mathbb{R}^{M\times(N-1)}$
\Ensure Harmonised attack $\mathbf{a}_{h}\!\in\!\operatorname{col}(\mathbf{H})$
\State $\hat{\boldsymbol{\delta}}\gets(\mathbf{H}^{\!\top}\mathbf{H})^{-1}\mathbf{H}^{\!\top}\mathbf{a}_{p}$ \Comment{Least-squares pseudoinverse}
\State $\mathbf{a}_{h}\gets\mathbf{H}\hat{\boldsymbol{\delta}}$ \Comment{Reproject onto $\operatorname{col}(\mathbf{H})$}
\State \Return $\mathbf{a}_{h}$
\end{algorithmic}
\end{algorithm}

\subsection{Conditioning, Pareto selection, and warm-up}
\label{ssec:conditioning}

\textbf{Continuous conditioning:} The conditioning vector $\mathbf{H}_{k}\!\in\!\mathbb{R}^{c_{\mathrm{dim}}}$ is constructed by flattening the first $\lfloor kM\rfloor$ rows of $\mathbf{H}$ and zero-padding to $c_{\mathrm{dim}}{=}M$. Block~B evaluates $k\!\in\!\{0,0.25,0.5,0.75,1\}$. Conditioning is injected by direct concatenation: the encoder and discriminator receive $[\mathbf{a};\mathbf{H}_{k}]$, the generator receives $[\mathbf{z};\mathbf{H}_{k}]$, and the DDPM denoising network receives $\mathbf{H}_{k}$ as an additional channel at every reverse-diffusion step.

\textbf{Pareto selection:} Models are ranked over four objectives: minimise MMD, maximise $\epsilon_{\mathrm{BDD}}$, maximise $\phi$, and maximise $\Delta\hat{x}_{\mathrm{norm}}$ (normalised attack impact). The fourth axis protects against magnitude collapse: a generator with $\Delta\hat{x}_{\mathrm{norm}}\!\ll\!1$ is statistically stealthy but operationally inconsequential.

\subsubsection{TC \texorpdfstring{$\beta$}{beta}-warm-up schedule}
\label{ssec:beta_warmup}

For RealNVP-TC-MMD-VAE the total-correlation weight $\beta$ is linearly annealed from $0$ to its target over the first $50$ epochs. Applying full $\beta$ from epoch zero produces the covariance-collapse mode reported in Section~\ref{ssec:results_30bus}; the warm-up restores $\kappa$ to $0.785$ at no measurable cost in MMD.

\subsection{Evaluation metrics, detectors, and inference}
\label{ssec:metrics}

Distributional fidelity is measured by Maximum Mean Discrepancy (MMD) with an RBF kernel ($\sigma$ set to the median pairwise distance):
\begin{equation}
\begin{aligned}
  \mathrm{MMD}^{2}(\mathcal{P},\mathcal{Q}) &= \mathbb{E}_{p,p'}[k_{\sigma}(\mathbf{x},\mathbf{x}')] - 2\mathbb{E}_{p,q}[k_{\sigma}(\mathbf{x},\mathbf{y})] \\
  &\quad + \mathbb{E}_{q,q'}[k_{\sigma}(\mathbf{y},\mathbf{y}')].
\end{aligned}
\label{eq:mmd}
\end{equation}
A first-order sliced Wasserstein $W_1$ is reported on 14-bus AC where the unbiased MMD estimator is unstable. Operational severity is $\Delta\hat{x}_{\mathrm{norm}} = \|\hat{\mathbf{c}}\|_{2}/\mathbb{E}_{\mathbf{c}_{r}\in\mathcal{D}_{\mathrm{eval}}}[\|\mathbf{c}_{r}\|_{2}]$; values $\approx 1$ indicate that synthetic attacks match the operational severity of real injections. The metrics table is consolidated in Table~\ref{tab:metric_summary}. All metrics are averaged over $N_{\mathrm{seed}}{=}5$ seeds and reported as mean~$\pm$~SD.

\begin{table}[!t]
\centering
\caption{Evaluation metrics (rows grouped by objective).}
\label{tab:metric_summary}
\begingroup
\fontsize{7}{8.4}\selectfont
\setlength{\tabcolsep}{3pt}
\begin{tabular}{p{1.9cm} p{1.4cm} p{1cm} p{2.6cm}}
\toprule
\textbf{Metric} & \textbf{Symbol} & \textbf{Direction} & \textbf{Role} \\
\midrule
MMD (RBF)              & MMD                       & $\downarrow$ & Pareto axis \\
Wasserstein-1 (sliced) & $W_{1}$                   & $\downarrow$ & AC proxy for MMD \\
BDD evasion            & $\epsilon_{\mathrm{BDD}}$ & $\uparrow$   & Pareto axis \\
LSTM-AE evasion        & $\epsilon_{\mathrm{AE}}$  & $\uparrow$   & Transparency metric \\
Isolation Forest evasion & $\epsilon_{\mathrm{IF}}$ & $\uparrow$  & 14-bus DC probe \\
Physics consistency    & $\phi$                    & $\uparrow$   & Pareto axis \\
Attack impact          & $\Delta\hat{x}_{\mathrm{norm}}$ & $\uparrow$ & Pareto (anti-collapse) \\
Discriminator score    & $\psi$                    & $\to 0.5$    & Calibration check \\
Cross-level agreement  & $\kappa$                  & $\uparrow$   & XAI consistency \\
\bottomrule
\end{tabular}
\endgroup
\end{table}

\textbf{Detectors:} BDD~\cite{liu2011fdia} thresholds $J(\mathbf{z}_{a})$ at $\tau_{\chi^{2}}$. LSTM-AE~\cite{xu2023ddetmtd} flags samples for which the reconstruction error exceeds a threshold $\tau_{\mathrm{ae}}$ learnt on $\mathcal{D}_{\mathrm{det}}$. Isolation Forest~\cite{liu2008iforest} provides a tree-partitioning probe on a 14-bus DC. Both data-driven thresholds are set at the 95th-percentile of the calibration partition and frozen before any attack data are observed.

\textbf{Statistical inference:} Friedman tests pool-level differences on the $20\!\times\!25$ MMD matrix, followed by Nemenyi post-hoc with Bonferroni correction at $\alpha_{\mathrm{adj}}{=}0.0026$. Paired Wilcoxon signed-rank tests evaluate PI/base differences with Cohen's $d$. Kruskal--Wallis tests $k$-level effects with $\eta^{2}=(H-G+1)/(N-G)$. Cohen's $\kappa$ on top-5 sensor rankings agrees with data-level standardised mean shifts and model-level Integrated Gradients on the discriminator. All $p$-values are two-sided at $\alpha{=}0.05$.

\textbf{Explainability:} Data-level attribution measures the per-feature mean shift $\Delta\boldsymbol{\mu}$ between generated and clean baselines. Model-level attribution uses Integrated Gradients (IG) \cite{sundararajan2017ig}: $\phi_j = (a_j - a'_j)\int_{0}^{1}\frac{\partial F(\mathbf{a}'+\alpha(\mathbf{a}-\mathbf{a}'))}{\partial a_j}\,\mathrm{d}\alpha$. Cross-level agreement is Cohen's $\kappa = (p_o - p_e)/(1 - p_e)$ \cite{mchugh2012interrater}, with $\kappa\!=\!1$ representing perfect agreement, $\kappa\!=\!0$ indicating chance agreement, and $\kappa\!<\!0$ reflecting systematic disagreement.

%% ==========================================================================
\section{Experimental Results}
\label{sec:results}

The results are presented in three blocks, following the structure of Section~\ref{ssec:framework}, with a discussion in each block. All values are mean~$\pm$~SD over 25 evaluation runs (5 seeds $\times$, 5 evaluation passes per seed). We conducted a comprehensive set of experiments across the three testbeds, the 20-model pool, the five attacker-knowledge levels, and the four ablation conditions; due to page limits, we present here the subset most relevant to the four contributions, with additional sensor-attribution grids and per-family ablation breakdowns available on request.

\subsection{Block A: 14-bus DC}
\label{ssec:results_14bus}

Table~\ref{tab:14bus} reports all 20 pool members on a 14-bus DC. The Pareto-optimal subset is \emph{PI-DDPM} (rank~1), \emph{LSTM-WGAN}, and \emph{LSTM-MMD-VAE-WGAN}. PI-DDPM achieves the lowest MMD ($0.0174\!\pm\!0.0021$) with $\epsilon_{\mathrm{BDD}}{=}0.940\!\pm\!0.019$ and $\Delta\hat{x}_{\mathrm{norm}}{=}0.97$, indicating that synthetic attacks match the operational severity of real injections to within $3\%$. Across all 20 members, $\epsilon_{\mathrm{BDD}}$ never falls below $0.866$, confirming that the residual test is structurally defeatable by every generative family. LSTM-AE evasion is negligible for the majority of models, with the base DDPM being the sole exception ($\epsilon_{\mathrm{AE}}{=}0.074\!\pm\!0.043$). Physics-informed wrapping consistently elevates $\phi$: PI-LSTM-WGAN reaches $\phi{=}0.637\!\pm\!0.006$ versus $0.246\!\pm\!0.012$ for its base ($2.6\times$). The gain is not universal: diffusion-based PI-variants show lower $\phi$ than their WGAN- and VAE-based peers under the same wrapper, indicating that DDPM dynamics are less amenable to a post-hoc affine power-balance correction.

\begin{table}[!t]
\centering
\caption{Block A on 14-bus DC, sorted by MMD. Pareto-optimal models in bold. SD shown as subscripts where $\ge 0.005$.}
\label{tab:14bus}
\begingroup
\fontsize{6.08}{7.29}\selectfont
\setlength{\tabcolsep}{1.5pt}
\begin{tabular}{l*{6}{>{\fontsize{5.80}{7.00}\selectfont\arraybackslash}r}}
\toprule
\textbf{Model} & \textbf{MMD} & $\boldsymbol{\epsilon}_{\mathrm{BDD}}$ & $\boldsymbol{\epsilon}_{\mathrm{AE}}$ & $\boldsymbol{\epsilon}_{\mathrm{IF}}$ & $\boldsymbol{\phi}$ & $\boldsymbol{\psi}$ \\
\midrule
\textbf{PI-DDPM}        & 0.017 & \textbf{0.940}\,{\tiny$\pm$.019} & 0.000 & 0.063\,{\tiny$\pm$.020} & 0.492\,{\tiny$\pm$.037} & 0.651\,{\tiny$\pm$.042} \\
\textbf{LSTM-WGAN}      & 0.020 & \textbf{1.000} & 0.000 & 0.000 & 0.246\,{\tiny$\pm$.012} & 0.608\,{\tiny$\pm$.048} \\
\textbf{LSTM-MMD-VAE-WGAN} & 0.023 & \textbf{1.000} & 0.000 & 0.004 & 0.235\,{\tiny$\pm$.009} & 0.620\,{\tiny$\pm$.052} \\
PI-FlowTS               & 0.024 & 1.000 & 0.000 & 0.002 & 0.625\,{\tiny$\pm$.007} & 0.683\,{\tiny$\pm$.053} \\
FlowTS                  & 0.025 & 0.998 & 0.000 & 0.000 & 0.235\,{\tiny$\pm$.005} & 0.541\,{\tiny$\pm$.040} \\
DDPM                    & 0.026 & 0.866\,{\tiny$\pm$.031} & 0.074\,{\tiny$\pm$.043} & 0.159\,{\tiny$\pm$.054} & 0.231\,{\tiny$\pm$.008} & 0.549\,{\tiny$\pm$.045} \\
MMD-VAE-WGAN            & 0.035 & 1.000 & 0.000 & 0.000 & 0.298\,{\tiny$\pm$.015} & 0.710\,{\tiny$\pm$.036} \\
PI-LSTM-WGAN            & 0.038 & 1.000 & 0.000 & 0.000 & 0.637\,{\tiny$\pm$.006} & 0.709\,{\tiny$\pm$.047} \\
PI-LSTM-MMD-VAE-WGAN    & 0.043 & 1.000 & 0.000 & 0.001 & 0.625\,{\tiny$\pm$.011} & 0.715\,{\tiny$\pm$.042} \\
PI-MMD-VAE-WGAN         & 0.046 & 1.000 & 0.000 & 0.000 & 0.633\,{\tiny$\pm$.009} & 0.761\,{\tiny$\pm$.036} \\
LSTM-MMD-VAE            & 0.097 & 1.000 & 0.000 & 0.002 & 0.266\,{\tiny$\pm$.008} & 0.641\,{\tiny$\pm$.048} \\
WGAN                    & 0.124 & 1.000 & 0.000 & 0.000 & 0.283\,{\tiny$\pm$.032} & 0.809\,{\tiny$\pm$.019} \\
PI-LSTM-MMD-VAE         & 0.159 & 1.000 & 0.000 & 0.009 & 0.540\,{\tiny$\pm$.012} & 0.767\,{\tiny$\pm$.036} \\
PI-WGAN                 & 0.172 & 1.000 & 0.000 & 0.002 & 0.580\,{\tiny$\pm$.025} & 0.813\,{\tiny$\pm$.023} \\
MMD-VAE                 & 0.236 & 1.000 & 0.000 & 0.038\,{\tiny$\pm$.019} & 0.272\,{\tiny$\pm$.029} & 0.814\,{\tiny$\pm$.019} \\
PI-MMD-VAE              & 0.246 & 1.000 & 0.000 & 0.053\,{\tiny$\pm$.022} & 0.395\,{\tiny$\pm$.028} & 0.813\,{\tiny$\pm$.021} \\
TC-MMD-VAE              & 0.264 & 1.000 & 0.000 & 0.396\,{\tiny$\pm$.032} & 0.216\,{\tiny$\pm$.008} & 0.815\,{\tiny$\pm$.009} \\
RealNVP-TC-MMD-VAE      & 0.264 & 1.000 & 0.000 & 0.390\,{\tiny$\pm$.037} & 0.205\,{\tiny$\pm$.012} & 0.818\,{\tiny$\pm$.015} \\
PI-TC-MMD-VAE           & 0.265 & 1.000 & 0.000 & 0.485\,{\tiny$\pm$.030} & 0.239\,{\tiny$\pm$.008} & 0.828\,{\tiny$\pm$.006} \\
PI-RealNVP-TC-MMD-VAE   & 0.266 & 1.000 & 0.000 & 0.486\,{\tiny$\pm$.029} & 0.227\,{\tiny$\pm$.014} & 0.831\,{\tiny$\pm$.003} \\
\bottomrule
\end{tabular}
\endgroup
\end{table}

\subsection{Block A: 30-bus DC and physics-projection failure}
\label{ssec:results_30bus}

On the 30-bus, FlowTS achieves the lowest MMD ($0.0015\!\pm\!0.0019$) with $\epsilon_{\mathrm{BDD}}{=}0.972\!\pm\!0.016$, making it the dominant Pareto point under the broken \textsc{PhysicsWrapper}; the base DDPM follows ($\mathrm{MMD}{=}0.0113$, $\epsilon_{\mathrm{BDD}}{=}0.912$). The headline finding on this testbed is the physics-projection failure mode of Lemma~\ref{lem:mean_leak}. Eight of the ten PI variants collapse to $\epsilon_{\mathrm{BDD}}\!<\!30\%$ (mean $2.4\%$); the two TC-based PI variants remain at $100\%$ because their latent TC structure incidentally generates samples close to $\operatorname{col}(\mathbf{H})$ in physical units. After the harmoniser is applied (Table~\ref{tab:recovery_p1}), all ten PI- variants reach $\epsilon_{\mathrm{BDD}}{=}1.000$ and the Pareto frontier is reshaped, PI-FlowTS (MMD $0.0016$) and PI-LSTM-WGAN (MMD $0.0032$) join the non-dominated set alongside FlowTS. The collapse is absent in the 14-bus DC because that testbed's calibrated $\tau$ is more permissive relative to residual inflation. LSTM-AE evasion is zero for all 20 models, indicating that the 30-bus LSTM-AE has learnt temporal patterns that no model evades without retraining.

The 14-bus DC testbed masks the failure because its calibrated $\tau$ is permissive enough that imperfect projections remain below it, and LSTM-bearing architectures benefit from temporal modelling that independently improves evasion. The collapse is structural, not incidental: any physics wrapper that separates normalisation from projection will exhibit the same failure whenever the training-data mean lies outside the physics-feasible subspace, the generic case for DC networks operating away from a flat-start reference. The implication is that any physics-informed generative model applying a linear constraint projection after a learnt normalisation must verify that the normalisation mean lies within the constraint subspace or apply the projection in physical units.

\textbf{Inference-time harmoniser:} Applying the harmoniser of Section~\ref{ssec:physics} ($\lambda{=}0$, single-pass projection in physical units) restores $\epsilon_{\mathrm{BDD}}{=}100\%$ for all ten PI variants on 30-bus without retraining; the average WLS $\chi^{2}$ drops from ${\sim}1.4{\times}10^{6}$ to $126{,}862$, a factor of $\sim 2.6$ below the calibrated threshold. Per-model recovery is in Table~\ref{tab:recovery_p1}.

\begin{table}[!t]
\centering
\caption{BDD-evasion recovery via the harmoniser (single pass) on 30-bus DC, $N_{\mathrm{seed}}{=}5$. ``Broken'' uses \textsc{PhysicsWrapper} in normalised space; ``Harm.'' applies projection in physical units.}
\label{tab:recovery_p1}
\begingroup
\fontsize{6.5}{7.8}\selectfont
\setlength{\tabcolsep}{2pt}
\begin{tabular}{l*{4}{>{\fontsize{6.2}{7.5}\selectfont\arraybackslash}r}}
\toprule
Model & $\epsilon_{\mathrm{BDD}}$ (broken) & $\epsilon_{\mathrm{BDD}}$ (harm.) & $\Delta\epsilon_{\mathrm{BDD}}$ & MMD\textsubscript{harm} \\
\midrule
PI-WGAN              & 0.136\,{\tiny$\pm$.010} & \textbf{1.000} & $+86.5\%$ & 0.0386 \\
PI-MMD-VAE           & 0.793\,{\tiny$\pm$.018} & \textbf{1.000} & $+20.7\%$ & 0.1724 \\
PI-DDPM              & 0.021\,{\tiny$\pm$.010} & \textbf{1.000} & $+97.9\%$ & 0.0110 \\
PI-FlowTS            & 0.022\,{\tiny$\pm$.008} & \textbf{1.000} & $+97.8\%$ & 0.0016 \\
PI-LSTM-WGAN         & 0.016\,{\tiny$\pm$.007} & \textbf{1.000} & $+98.4\%$ & 0.0032 \\
PI-LSTM-MMD-VAE      & 0.257\,{\tiny$\pm$.024} & \textbf{1.000} & $+74.3\%$ & 0.0665 \\
PI-MMD-VAE-WGAN      & 0.018\,{\tiny$\pm$.006} & \textbf{1.000} & $+98.2\%$ & 0.0067 \\
PI-LSTM-MMD-VAE-WGAN & 0.017\,{\tiny$\pm$.003} & \textbf{1.000} & $+98.3\%$ & 0.0093 \\
PI-TC-MMD-VAE        & 1.000 & \textbf{1.000} & $+0.0\%$  & 0.2437 \\
PI-RealNVP-TC-MMD-VAE& 1.000 & \textbf{1.000} & $+0.0\%$  & 0.2440 \\
\bottomrule
\end{tabular}
\endgroup
\end{table}

\textbf{TC covariance collapse and $\beta$-warm-up:} Static $\beta{=}6$ in RealNVP-TC-MMD-VAE produces below-chance attribution ($\kappa{=}-0.076$ on seed~42, mean $0.053\!\pm\!0.105$): the TC penalty severs the negative latent correlations that $\mathbf{c}{=}\mathbf{H}\boldsymbol{\delta}$ requires, leaving a generator that matches marginal distributions but not the jointly feasible attack manifold. The $50$-epoch $\beta$-warm-up (Section~\ref{ssec:beta_warmup}) separates two phases: the encoder first learns the joint structure (TC loss $\approx\!78$ at epoch $50$), then disentanglement is gradually activated (TC loss $\to\!0.007$ at epoch $150$). The schedule drives $\kappa{\to}0.785$ on 30-bus and reduces MMD by $14.1\%$ (TC-MMD-VAE) and $19.2\%$ (RealNVP-TC-MMD-VAE). On 14-bus AC, the same schedule moves TC-MMD-VAE from $\kappa{=}-0.036$ to $0.655\!\pm\!0.106$ ($\Delta\kappa{=}+0.691$). On 14-bus DC, neither architecture benefits because the rank-$13$ $\mathbf{H}$ does not produce the joint correlation pattern needed for a detectable warm-up shift.

\subsection{Block A: 14-bus AC (68-dimensional phasor)}
\label{ssec:results_ddetmtd}

On the 14-bus AC, every model achieves $\epsilon_{\mathrm{BDD}}{=}1.000$; the 35,130-sample, 68-dimensional corpus provides a sufficiently rich signal for every architecture to learn measurement patterns that are invisible to BDD. Table~\ref{tab:ddetmtd} reports the leading models sorted by sliced Wasserstein distance $W_1$. The three lowest-$W_1$ models are RealNVP-TC-MMD-VAE ($0.0104$), TC-MMD-VAE ($0.0107$), and MMD-VAE ($0.0108$), each with $\epsilon_{\mathrm{AE}}{=}0.935$. Physics-informed wrapping produces a second negative result on this testbed: PI- variants exhibit uniformly lower LSTM-AE evasion than their base counterparts (PI-RealNVP-TC-MMD-VAE: $0.777$ vs. $0.935$; PI-DDPM: $0.521$ vs. $0.908$; PI-FlowTS: $0.236$ vs. $0.907$.) The DC-derived projection introduces systematic deviations in the imaginary components of current phasors that the LSTM-AE learns to flag, recurring with greater severity than on 30-bus DC. The three testbeds, therefore, expose three distinct regimes of the physics--evasion interaction, none of which would be visible in a single-testbed evaluation. 
\begin{table}[!t]
\centering
\caption{Block A on 14-bus AC (68-dim), sorted by $W_1$. $\epsilon_{\mathrm{BDD}}{=}1.000$ for all models. Pareto-optimal in bold.}
\label{tab:ddetmtd}
\begingroup
\fontsize{6.5}{7.8}\selectfont
\setlength{\tabcolsep}{2pt}
\begin{tabular}{l*{5}{>{\fontsize{6.2}{7.5}\selectfont\arraybackslash}c}}
\toprule
\textbf{Model} & \textbf{$W_{1}$} & \textbf{MMD} & $\boldsymbol{\epsilon}_{\mathrm{AE}}$ & $\boldsymbol{\phi}$ & $\boldsymbol{\psi}$ \\
\midrule
\textbf{RealNVP-TC-MMD-VAE} & 0.0104\,{\tiny$\pm$.0003} & 0.101\,{\tiny$\pm$.010} & 0.935 & 0.094\,{\tiny$\pm$.006} & 0.911 \\
\textbf{TC-MMD-VAE}    & 0.0107\,{\tiny$\pm$.0002} & 0.123\,{\tiny$\pm$.007} & 0.935 & 0.081\,{\tiny$\pm$.013} & 0.911 \\
\textbf{MMD-VAE}       & 0.0108\,{\tiny$\pm$.0001} & 0.112\,{\tiny$\pm$.004} & 0.935 & 0.162\,{\tiny$\pm$.005} & 0.912 \\
LSTM-MMD-VAE          & 0.0117 & 0.074 & 0.934 & 0.161 & 0.905 \\
DDPM                  & 0.0155 & 0.010 & 0.908 & 0.146 & 0.902 \\
PI-TC-MMD-VAE         & 0.0169 & 0.055 & 0.873 & 0.318 & 0.910 \\
LSTM-MMD-VAE-WGAN     & 0.0188 & 0.017 & 0.930 & 0.166 & 0.906 \\
PI-RealNVP-TC-MMD-VAE & 0.0202 & 0.044 & 0.777 & 0.317 & 0.910 \\
FlowTS                & 0.0229 & 0.015 & 0.907 & 0.161 & 0.903 \\
LSTM-WGAN             & 0.0230 & 0.022 & 0.898 & 0.143 & 0.904 \\
PI-MMD-VAE            & 0.0233 & 0.035 & 0.784 & 0.299 & 0.908 \\
PI-LSTM-MMD-VAE       & 0.0299 & 0.082 & 0.621 & 0.273 & 0.906 \\
WGAN                  & 0.0371 & 0.081 & 0.739 & 0.129 & 0.906 \\
PI-DDPM               & 0.0448 & 0.169 & 0.521 & 0.272 & 0.899 \\
PI-LSTM-WGAN          & 0.0467 & 0.141 & 0.448 & 0.272 & 0.902 \\
PI-LSTM-MMD-VAE-WGAN  & 0.0496 & 0.158 & 0.470 & 0.269 & 0.907 \\
MMD-VAE-WGAN          & 0.0512 & 0.335 & 0.516 & 0.124 & 0.908 \\
PI-MMD-VAE-WGAN       & 0.0624 & 0.171 & 0.246 & 0.273 & 0.906 \\
PI-FlowTS             & 0.0632 & 0.171 & 0.236 & 0.277 & 0.904 \\
PI-WGAN               & 0.0738 & 0.221 & 0.252 & 0.261 & 0.905 \\
\bottomrule
\end{tabular}
\endgroup
\end{table}

\subsection{Statistical validation of pool diversity and physics effects}
\label{ssec:results_stats}

Friedman applied to the $20\!\times\!25$ MMD matrix yields $\chi^{2}_{F}(19){=}459.13$, $p{=}2.03{\times}10^{-85}$, and to BDD evasion $\chi^{2}_{F}(19){=}380.00$, $p{=}6.26{\times}10^{-69}$. Nemenyi post-hoc with Bonferroni correction identifies $20$ MMD-distinct and $19$ evasion-distinct pairs. The ten paired Wilcoxon signed-rank tests for PI/base differences are all significant at $p\!\le\!1.13{\times}10^{-6}$; the sign of Cohen's $d$ depends on family. PI-wrapping reduces MMD for five VAE-based pairs ($d\!\in\![-7.48,-0.66]$) and increases it for five DDPM, flow, and LSTM-GAN pairs ($d\!\in\![+0.95,+7.74]$).

\begin{figure}[!t]
\centering
\includegraphics[width=\columnwidth]{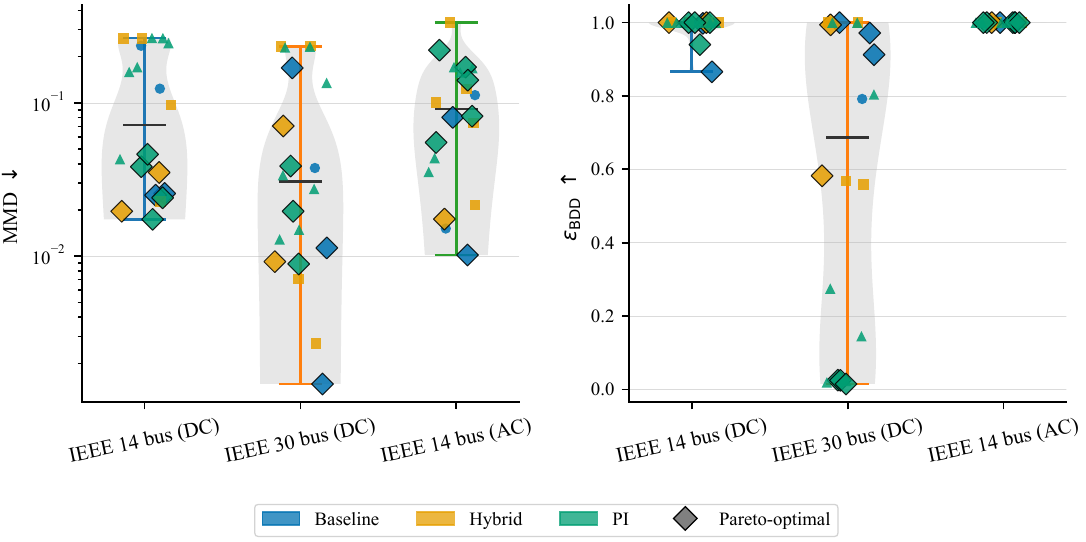}
\caption{Pool-level MMD (left) and BDD-evasion (right) distributions across testbeds, grouped by family (Baseline, Hybrid, PI). Diamonds mark Pareto-optimal models, the median is shown by horizontal bars. The 30-bus DC violin reveals the bimodal $\epsilon_{\mathrm{BDD}}$ structure produced by the failure mode of Lemma~\ref{lem:mean_leak}, eight PI- variants collapse near zero while two TC-based PI variants and the bases remain near unity.}
\label{fig:figR5}
\end{figure}

Fig.~\ref{fig:figR5} renders these family-level differences as the underlying distributions across testbeds. The MMD violin (left, log scale) is widest on 30-bus DC, where the lower tail extends below $10^{-3}$ for FlowTS and the LSTM-bearing hybrids, while the upper tail still reaches ${\approx}3{\times}10^{-1}$ for the high-MMD MMD-VAE-WGAN, spanning more than two decades. The 14-bus DC spread is narrower, concentrated between $2{\times}10^{-2}$ and $3{\times}10^{-1}$, and the 14-bus AC distribution is shifted upward and tightly clustered around $5{\times}10^{-2}$ to $2{\times}10^{-1}$, reflecting that the 68-dimensional phasor corpus produces uniformly higher absolute MMD than the lower-dimensional DC testbeds because the metric scales with feature dimension. The BDD-evasion violin (right) makes the failure mode of Lemma~\ref{lem:mean_leak} visible at a glance. The 30-bus DC distribution is bimodal, with eight PI variants concentrated near zero and the two TC-based PI variants together with the bases pinned at unity, whereas the 14-bus DC and 14-bus AC remain unimodal at $\epsilon_{\mathrm{BDD}}{=}1$. Pareto-optimal models (diamonds) lie at the lower tail of the MMD violin in every testbed, confirming that the four-objective selection of Section~\ref{ssec:conditioning} consistently identifies the distributionally tightest fits. Fig.~\ref{fig:figR6} shows the complete $20\!\times\!9$ evasion heatmap consolidating all three testbeds with the relevant per-detector evasion rates.

\begin{figure}[!t]
\centering
\includegraphics[width=\columnwidth]{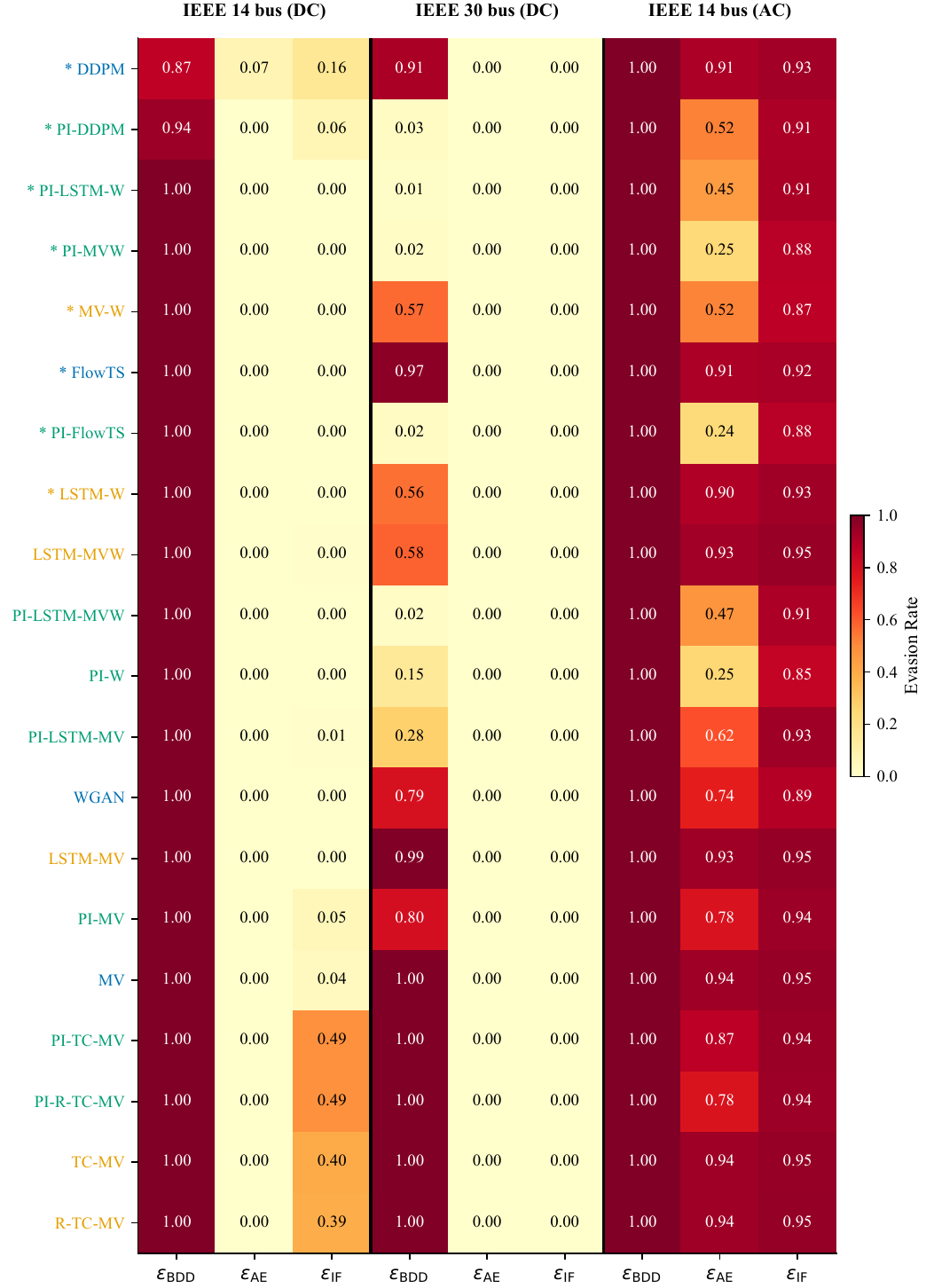}
\caption{Evasion heatmap: 20 pool models (sorted by Pareto rank, starred entries are Pareto-optimal) versus nine detector--testbed combinations. Cell values are mean evasion rates over 25 runs (Acronyms (for hybrid models, e.g., MV: MMD-VAE): M: MMD, W: WGAN, V: VAE.)}
\label{fig:figR6}
\end{figure}

\subsection{Block B: attacker-knowledge conditioning}
\label{ssec:results_blockb}

Table~\ref{tab:blockb_wd} reports $W_1$ across the five $k$-levels for the representative per-family triplet on each testbed; Fig.~\ref{fig:figR3} renders the corresponding ribbons with Kruskal--Wallis statistics and MITRE ATT\&CK stage labels. Across all nine model--testbed pairs, the Kruskal--Wallis $H$ exceeds $16.7$ at $p\!\le\!0.0022$ with $\eta^{2}\!\ge\!0.636$, demonstrating strong knowledge-driven fidelity differences. The increase is monotonic in $k$ in every case up to $k{=}0.75$, with a small reversal at $k{=}1$ on the 14-bus AC for two of the three models, attributable to the over-constraint of the affine projection when the entire $\mathbf{H}$ is supplied as conditioning input. Physics consistency $\phi$ peaks at $k{=}0.75$ and drops at $k{=}1$ on every testbed (bottom row of Fig.~\ref{fig:figR3}). The greatest degradation in relative fidelity is on the 30-bus DC: FlowTS increases $37.8\times$ in $W_1$ from $k{=}0$ to $k{=}1$, and MMD-VAE increases $49.2\times$. The $k{=}0$ regime represents the most realistic external adversary, while $k{=}1$ should be reserved for worst-case insider scenarios.

\begin{table}[!t]
\centering
\caption{Block B: $W_1$ (mean$\pm$SD, 5 seeds) by $k$ for the representative per-family triplet on each testbed (MV: MMD-VAE, W: WGAN).}
\label{tab:blockb_wd}
\begingroup
\fontsize{6.2}{7.4}\selectfont
\setlength{\tabcolsep}{1.8pt}
\begin{tabular}{ll*{5}{>{\fontsize{6}{7.2}\selectfont\arraybackslash}r}}
\toprule
\textbf{Testbed} & \textbf{Model} & $k{=}0$ & $k{=}0.25$ & $k{=}0.5$ & $k{=}0.75$ & $k{=}1$ \\
\midrule
\multirow{3}{*}{14-bus DC}
 & DDPM        &  10.20\,{\tiny$\pm$1.35} &  12.40\,{\tiny$\pm$1.48} &  14.46\,{\tiny$\pm$2.35} &  15.11\,{\tiny$\pm$1.18} &  18.60\,{\tiny$\pm$1.83} \\
 & PI-DDPM     &   2.34\,{\tiny$\pm$0.79} &   3.12\,{\tiny$\pm$0.62} &   5.61\,{\tiny$\pm$2.48} &   6.52\,{\tiny$\pm$2.09} &   8.15\,{\tiny$\pm$1.66} \\
 & PI-LSTM-W   &   0.27\,{\tiny$\pm$0.02} &   0.79\,{\tiny$\pm$0.08} &   1.40\,{\tiny$\pm$0.13} &   2.11\,{\tiny$\pm$0.15} &   2.69\,{\tiny$\pm$0.13} \\
\midrule
\multirow{3}{*}{30-bus DC}
 & MMD-VAE     &   1.83\,{\tiny$\pm$0.17} &  22.67\,{\tiny$\pm$1.20} &  45.01\,{\tiny$\pm$1.65} &  66.78\,{\tiny$\pm$3.32} &  89.74\,{\tiny$\pm$3.54} \\
 & FlowTS      &   9.30\,{\tiny$\pm$0.62} &  94.91\,{\tiny$\pm$3.52} & 177.98\,{\tiny$\pm$3.44} & 265.86\,{\tiny$\pm$9.60} & 351.54\,{\tiny$\pm$8.64} \\
 & PI-DDPM     &  42.19\,{\tiny$\pm$6.21} & 110.60\,{\tiny$\pm$13.0} & 204.30\,{\tiny$\pm$29.6} & 281.23\,{\tiny$\pm$18.7} & 431.49\,{\tiny$\pm$12.2} \\
\midrule
\multirow{3}{*}{14-bus AC}
 & WGAN        & 0.0080\,{\tiny$\pm$.001} & 0.0090\,{\tiny$\pm$.001} & 0.0123\,{\tiny$\pm$.001} & 0.0169\,{\tiny$\pm$.001} & 0.0155\,{\tiny$\pm$.001} \\
 & DDPM        & 0.0047\,{\tiny$\pm$.001} & 0.0049\,{\tiny$\pm$.000} & 0.0060\,{\tiny$\pm$.000} & 0.0078\,{\tiny$\pm$.000} & 0.0073\,{\tiny$\pm$.001} \\
 & PI-WGAN     & 0.0165\,{\tiny$\pm$.001} & 0.0172\,{\tiny$\pm$.003} & 0.0182\,{\tiny$\pm$.002} & 0.0240\,{\tiny$\pm$.001} & 0.0589\,{\tiny$\pm$.002} \\
\bottomrule
\end{tabular}
\endgroup
\end{table}

\begin{figure*}[!t]
\centering
\includegraphics[width=.95\textwidth]{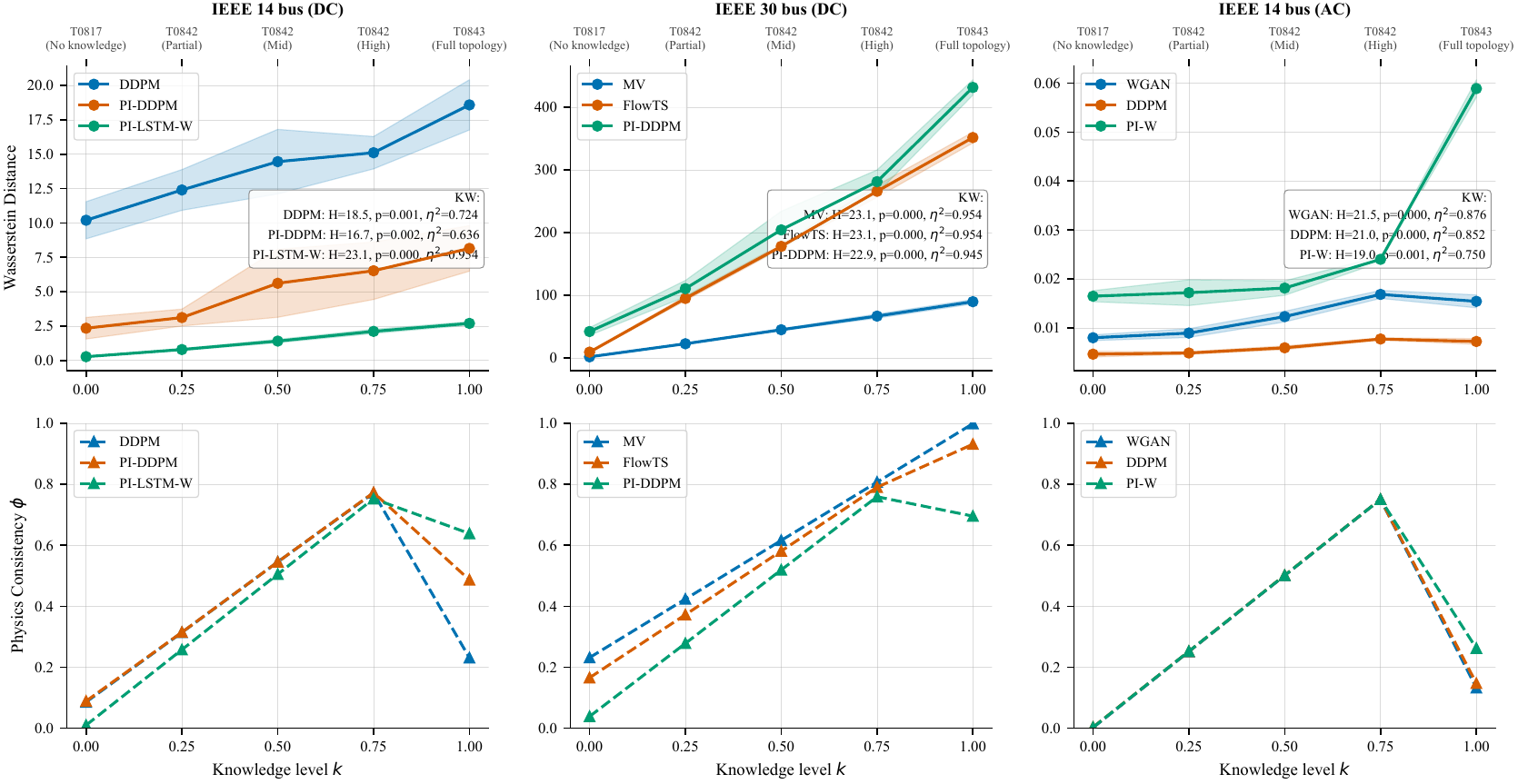}
\caption{Block B conditioning ribbons: $W_1$ (top) and physics consistency $\phi$ (bottom) versus $k$ for the representative per-family triplet on each testbed. MITRE ATT\&CK stages annotate the top axis. $\phi$ peaks at $k{=}0.75$ on every testbed (Acronyms (for hybrid models, e.g., MV: MMD-VAE): M: MMD, W: WGAN, V: VAE.).}
\label{fig:figR3}
\end{figure*}

\subsection{Ablation summary}
\label{ssec:results_ablation}

The four ablation conditions (\emph{no} physics, \emph{no} selection, \emph{no} conditioning, \emph{no} temporal) were trained from scratch with three seeds for $200$ epochs on three PI-probes across all testbeds. The two non-trivial conditions are summarised here; \emph{no-selection} and \emph{no} conditioning produce $\Delta\mathrm{MMD}\!\approx\!0$ at the $k{=}0$ default. \emph{No-physics} on the DC testbeds increases MMD by $26\%$ (14-bus) and $18\%$ (30-bus) for PI-LSTM-MMD-VAE-WGAN, while leaving the TC and flow probes nearly unchanged. On 14-bus AC, the sign reverses: removing the physics projection \emph{reduces} MMD by $78\%$ for PI-LSTM-MMD-VAE-WGAN and by $32\%$ for PI-RealNVP. The DC-derived $\mathbf{H}$ embedded in \textsc{PhysicsWrapper} introduces a systematic mismatch under 68-dimensional AC phasor measurements, and disabling it improves fidelity. The \emph{no-temporal} effect produces the largest single-component impact: removing the LSTM encoder increases MMD by $294\%$ on 14-bus DC and $591\%$ on 30-bus DC for PI-LSTM-MMD-VAE-WGAN. The same removal is negligible on 14-bus AC, where the 35{,}130-sample corpus provides sufficient signal for non-temporal architectures (Fig.~\ref{fig:figR4}).

\begin{figure*}[!t]
\centering
\includegraphics[width=\textwidth]{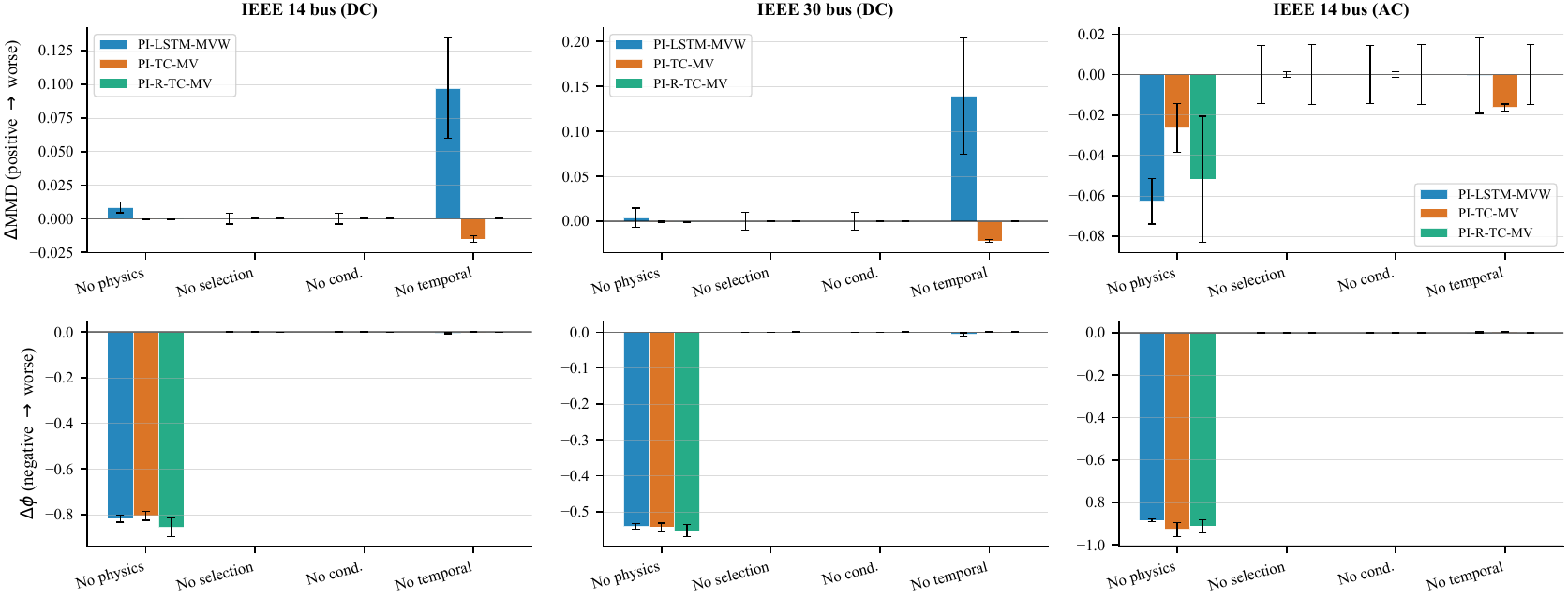}
\caption{Ablation waterfall: $\Delta\mathrm{MMD}$ relative to the full model for four ablation conditions across three PI- probes and three testbeds. Note the sign reversal of \emph{no-physics} on 14-bus AC (Acronyms (for hybrid models, e.g., MV: MMD-VAE): M: MMD, W: WGAN, V: VAE.).}
\label{fig:figR4}
\end{figure*}

\subsection{Block C: cross-level explainability}
\label{ssec:results_xai}

Cohen's $\kappa$ on the top-5 sensor rankings is reported in Table~\ref{tab:xai_kappa} after the harmoniser is applied to all PI- models and TC checkpoints are loaded from $\beta$-warm-up. PI-LSTM-MMD-VAE reaches $\kappa{=}1.000$ on the 14-bus DC, and both PI-TC-MMD-VAE and PI-LSTM-MMD-VAE-WGAN reach $1.000$ on the 30-bus DC. On both DC testbeds, the PI- mean exceeds the base mean: $0.706$ vs.\ $0.559$ on the 14-bus, $0.785$ vs.\ $0.139$ on the 30-bus. AC68 is reported as additional evidence: PI- $0.568$ vs.\ base $0.137$. Fig.~\ref{fig:recovery} visualises the joint impact of the harmoniser on BDD evasion (left panel, all ten PI- models) and on cross-level agreement under the four configurations of PI-TC-MMD-VAE (right panel).

\begin{table}[!t]
\centering
\caption{Block C: Cohen's $\kappa$ (data- vs.\ model-level top-5 sensor agreement). PI- models use the harmoniser; TC models use $\beta$-warm-up checkpoints.}
\label{tab:xai_kappa}
\begingroup
\fontsize{7}{8.4}\selectfont
\setlength{\tabcolsep}{3pt}
\begin{tabular}{lccc}
\toprule
\textbf{Model} & \textbf{14-bus} & \textbf{30-bus} & \textbf{AC68} \\
\midrule
PI-LSTM-MMD-VAE-WGAN  & \textbf{1.000} & 1.000 & 0.784 \\
PI-TC-MMD-VAE         & 0.559          & \textbf{1.000} & 0.352 \\
PI-RealNVP-TC-MMD-VAE & 0.559          & 0.355 & 0.568 \\
LSTM-MMD-VAE-WGAN     & 0.559          & 0.139 & 0.137 \\
\bottomrule
\end{tabular}
\endgroup
\end{table}

\begin{figure}[!t]
\centering
\includegraphics[width=\columnwidth]{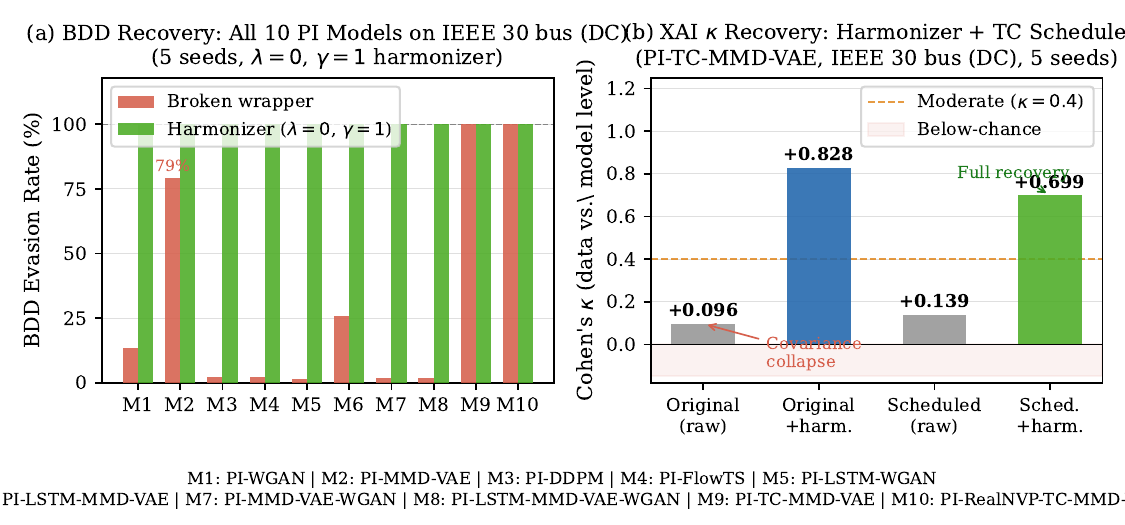}
\caption{Block C joint recovery on 30-bus DC. \emph{Left}: BDD evasion before (broken \textsc{PhysicsWrapper}, normalised-space projection) and after (harmoniser) for all ten PI- models; eight of ten collapse to ${<}30\%$ before harmonisation and recover to $100\%$ after. \emph{Right}: Cohen's $\kappa$ for PI-TC-MMD-VAE under four configurations, covariance collapse with raw original ($\kappa{=}+0.096$), harmoniser alone ($+0.828$), $\beta$-warm-up alone ($+0.139$), and combined ($+0.699$, well above the moderate-agreement threshold $\kappa{=}0.4$).}
\label{fig:recovery}
\end{figure}

The mechanism is the $\operatorname{col}(\mathbf{H})$ projection. By forcing all attack energy into the jointly observable subspace, the harmoniser makes the projection magnitude a coherent model-level signal that aligns with the data-level Wasserstein shift on the dominant sensors. Attacks with residual null-space components produce incoherent model-level gradients because the null-space contribution is undetectable at the data level but appears in the IG path integral. A null space-aware blended projection $\hat{\mathbf{a}}_{i}=t_{i}\mathbf{P}_{\!\mathbf{H}}\mathbf{a}_{i}+(1-t_{i})\mathbf{a}_{i}$ provides a clean diagnostic: $\kappa$ jumps discontinuously from $0.139$ to $0.398$ only at $t{=}1$, while MMD remains flat across all $t$, showing that the column space and null space carry disjoint attribution signatures, and partial enforcement is dominated by the full harmoniser. Prediction-level attribution applied to the final-layer logit produced zero-valued gradients, a known artefact of binary classifiers operating at high confidence, and is therefore omitted.

\subsection{Comparison with prior work}
\label{ssec:disc_related}

\textit{iAttackGen} \cite{shahriar2022iattackgen} reports $>\!90\%$ BDD evasion for a single WGAN on 14-bus DC without a holdout test partition. Under the strict protocol used here, WGAN reaches comparable evasion ($\epsilon_{\mathrm{BDD}}{=}1.000$) but is Pareto-dominated by PI-DDPM (lower MMD) and LSTM-MMD-VAE-WGAN (higher Isolation Forest evasion), justifying the multi-model pool. \textit{SparseGAN} \cite{li2023sparsegan} and related constraint-based approaches \cite{li2024advdiffusionfdia} omit the explicit physics projection; the no-physics ablation increases MMD by up to $26\%$ on 14-bus and $18\%$ on 30-bus for LSTM-bearing PI- models, supporting projection as a regulariser even on testbeds where it carries an evasion penalty. Prior work reporting $>\!90\%$ BDD evasion for PI- models on 30-bus networks operates under two compounding artefacts: clean measurements in per-unit (mean ${\sim}0.19$) evaluated against attack data in MW (mean ${\sim}322$), creating a $1615\times$ scale mismatch that renders the BDD threshold meaningless; and distributional metrics computed against the training partition rather than a held-out evaluation set. The strict 60/20/20 chronological partition with data-driven threshold calibration gives the first evaluation in which the geometric conflict is visible, and the harmoniser closes the resulting performance gap to $100\%$ without retraining. Concurrent diffusion-based FDIA work \cite{li2024advdiffusionfdia,zhang2024piddpm,qin2026false} reports BDD evasion only; the multi-detector probe used here exposes a vulnerability surface that single-detector evaluations conceal: on the 14-bus AC, the best base model achieves $\epsilon_{\mathrm{AE}}{=}0.935$, while PI-variants drop to $0.236$--$0.777$.

\subsection{Limitations}
\label{ssec:disc_limitations}

\textsc{PhysicsWrapper} uses the DC power-flow linearisation, which ignores reactive power, voltage-magnitude variation, and transformer tap positions; the 14-bus AC testbed already shows that the DC projection corrupts imaginary phasor components. All three testbeds use simulator output; field PMU recordings were unavailable due to commercial confidentiality. LSTM-AE evasion is zero for all 20 models on 30-bus DC, reflecting the more distinctive measurement-correlation structure of that topology rather than a generator deficiency. A shadow-LSTM-AE fine-tuning experiment reduced shadow reconstruction error on attacks by ${\sim}600\times$ in 20 epochs but moved BDD evasion only marginally, indicating that threshold-based BDD is insensitive to the distributional shifts achievable in a short fine-tuning window; a differentiable surrogate BDD loss with curriculum difficulty is needed for reliable end-to-end evasion optimisation.

%% ==========================================================================
\section{Conclusion}
\label{sec:conclusion}

\textsc{GenAI-FDIA} is a unified framework for benchmarking physics-informed generative FDIA synthesis. A pool of $P{=}20$ architectures spanning Wasserstein GANs, MMD-VAE, normalising flows, and score-based diffusion, including four novel cross-family hybrids (MMD-VAE-WGAN, LSTM-MMD-VAE-WGAN, TC-MMD-VAE, RealNVP-TC-MMD-VAE), is evaluated on three IEEE benchmarks (14-bus DC, 30-bus DC, 14-bus AC with 68-dimensional phasor measurements) under a strict 60/20/20 chronological partition with data-driven detector calibration. The four hybrids each yield a measurable gain over their parents; MMD-VAE-WGAN (MMD $0.035$ on 14-bus DC) is $6.7\times$ tighter than the MMD-VAE base ($0.236$) and $3.5\times$ tighter than WGAN ($0.124$), confirming that the adversarial discriminator sharpens the encoder posterior; LSTM-MMD-VAE-WGAN is Pareto-optimal on 14-bus DC at MMD $0.023$, a $4.2\times$ improvement over LSTM-MMD-VAE ($0.097$); RealNVP-TC-MMD-VAE leads the 14-bus AC frontier ($W_1{=}0.0104$, the lowest of all 20 models) ahead of TC-MMD-VAE ($0.0107$) and the MMD-VAE parent ($0.0108$), and the addition of total-correlation disentanglement to the latent yields $\kappa{=}1.000$ cross-level agreement on 30-bus DC after $\beta$-warm-up.

The main contributions are: (i) a multi-model Pareto pool whose non-dominated subset varies across testbeds (PI-DDPM, LSTM-WGAN, and LSTM-MMD-VAE-WGAN on 14-bus DC; FlowTS, PI-FlowTS, and PI-LSTM-WGAN on 30-bus DC after the harmoniser; RealNVP-TC-MMD-VAE, TC-MMD-VAE, and MMD-VAE on 14-bus AC), supporting the rejection of a single architectural choice; (ii) a formal geometric characterisation of the $\boldsymbol{\mu}\!\notin\!\operatorname{col}(\mathbf{H})$ failure mode that displaces inverse-scaled physics-projected attacks outside the stealthy subspace, together with an inference-time harmoniser that restores $\epsilon_{\mathrm{BDD}}{=}100\%$ for all ten PI- variants without retraining; (iii) the identification of a total-correlation covariance-collapse mode and its resolution via a $50$-epoch $\beta$-warm-up schedule that lifts cross-level attribution agreement from below-chance to substantial; and (iv) a continuous attacker-knowledge axis $k\!\in\![0,1]$ mapped to MITRE ATT\&CK ICS sub-techniques, under which $W_1$ increases monotonically by $1.6$--$49.2\times$ from $k{=}0$ to $k{=}1$ at strong effect sizes ($\eta^{2}\!\ge\!0.636$).

Empirically, every pool member achieves $\epsilon_{\mathrm{BDD}}\!\ge\!86.6\%$ on the 14-bus DC; the residual test can be defeated by every generative family. Physics-informed wrapping raises $\phi$ by $2.6\times$ on the 14-bus, but under the broken normalised-space projection collapses BDD evasion to $<\!30\%$ for eight of ten PI-variants on the 30-bus; the harmoniser restores $100\%$ for all ten on every testbed. Temporal modelling is the dominant ablation driver ($\Delta\mathrm{MMD}$ up to $+591\%$ on the 30-bus DC when LSTM layers are removed). Cross-level XAI agreement reaches $\kappa{=}1.000$ on both DC testbeds with the harmoniser and $\beta$-warm-up applied. For practitioners, we recommend PI-LSTM-WGAN ($\epsilon_{\mathrm{BDD}}{=}100\%$, best physics consistency) or PI-DDPM (lowest MMD) as primary candidates on the 14-bus DC. For larger networks, PI-variants must apply the harmoniser in physical units rather than in normalised feature space. Three directions follow directly: a differentiable AC power-flow constraint layer to remove the DC approximation, an end-to-end evasion-training scheme via a differentiable surrogate BDD loss, and validation on field PMU recordings.

\section*{Acknowledgment}
The authors acknowledge the use of AI technologies (Claude Opus 4.7) during the preparation of this manuscript strictly for English language checking, \LaTeX\ code error correction, and document formatting. The authors retain full responsibility for the scientific content, accuracy, and integrity of this work.

%% -- Bibliography -----------------------------------------------------------
\bibliographystyle{IEEEtran}
\bibliography{references}

@techreport{irena2024renewables,
  author      = {{International Renewable Energy Agency}},
  title       = {Renewable Power Generation Costs in 2023},
  institution = {IRENA},
  year        = {2024},
  url         = {https://www.irena.org/Publications/2024/Sep/Renewable-Power-Generation-Costs-in-2023}
}

@techreport{lee2016ukrainegrid,
  author      = {Lee, Robert M. and Assante, Michael J. and Conway, Tim},
  title       = {Analysis of the Cyber Attack on the {Ukrainian} Power Grid},
  institution = {E-ISAC and SANS ICS},
  year        = {2016},
  month       = mar,
  url         = {https://ics.sans.org/media/E-ISAC_SANS_Ukraine_DUC_5.pdf}
}

@techreport{eset2022industroyer2,
  author      = {{ESET Research}},
  title       = {{INDUSTROYER2}: {Industroyer} Reloaded},
  institution = {ESET},
  year        = {2022},
  month       = apr,
  url         = {https://www.welivesecurity.com/2022/04/12/industroyer2-industroyer-reloaded/}
}

@techreport{cisa2024volttyphoon,
  author      = {{CISA} and {NSA} and {FBI}},
  title       = {{People's Republic of China} State-Sponsored Cyber Actor
                 Living off the Land to Evade Detection},
  institution = {Cybersecurity and Infrastructure Security Agency},
  year        = {2024},
  month       = feb,
  url         = {https://www.cisa.gov/news-events/cybersecurity-advisories/aa24-038a}
}

@article{liu2011fdia,
  author    = {Liu, Yao and Ning, Peng and Reiter, Michael K.},
  title     = {False Data Injection Attacks against State Estimation
               in Electric Power Grids},
  journal   = {ACM Transactions on Information and System Security},
  volume    = {14},
  number    = {1},
  pages     = {1--33},
  year      = {2011},
  month     = may,
  doi       = {10.1145/1952982.1952995}
}

@article{tian2022afdias,
  author    = {Tian, Jiwei and Wang, Buhong and Wang, Zhen and
               Cao, Kunrui and Li, Jing and Ozay, Mete},
  title     = {Joint Adversarial Example and False Data Injection
               Attacks for State Estimation in Power Systems},
  journal   = {IEEE Transactions on Cybernetics},
  volume    = {52},
  number    = {12},
  pages     = {13699--13713},
  year      = {2022},
  month     = dec,
  doi       = {10.1109/TCYB.2021.3125345}
}

@inproceedings{shahriar2022iattackgen,
  author    = {Shahriar, Md Hasan and Khalil, Alvi Ataur and
               Rahman, Mohammad Ashiqur and Manshaei, Mohammad Hossein
               and Chen, Dong},
  title     = {{iAttackGen}: Generative Synthesis of False Data
               Injection Attacks in Cyber-Physical Systems},
  booktitle = {Proceedings of the IEEE International Conference on
               Communications, Control, and Computing Technologies
               for Smart Grids (SmartGridComm)},
  year      = {2022},
  doi       = {10.1109/SmartGridComm52983.2022.9961004}
}

@article{xu2023ddetmtd,
  author    = {Xu, Wangkun and Higgins, Martin and Wang, Jianhong and
               Jaimoukha, Imad M. and Teng, Fei},
  title     = {Blending Data and Physics Against False Data Injection
               Attack: An Event-Triggered Moving Target Defence Approach},
  journal   = {IEEE Transactions on Smart Grid},
  volume    = {14},
  number    = {4},
  pages     = {2613--2627},
  year      = {2023},
  month     = jul,
  doi       = {10.1109/TSG.2022.3231728}
}

@article{musleh2023lstmfdia,
  author    = {Musleh, Ahmed S. and Chen, Guo and Dong, Zhao Yang and
               Wang, Chen and Chen, Shiping},
  title     = {Spatio-Temporal Data-Driven Detection of False Data
               Injection Attacks in Power Distribution Systems},
  journal   = {International Journal of Electrical Power \& Energy Systems},
  volume    = {147},
  pages     = {108926},
  year      = {2023},
  doi       = {10.1016/j.ijepes.2022.108926}
}

@inproceedings{gulrajani2017wgangp,
  author    = {Gulrajani, Ishaan and Ahmed, Faruk and Arjovsky, Martin
               and Dumoulin, Vincent and Courville, Aaron},
  title     = {Improved Training of {Wasserstein} {GANs}},
  booktitle = {Advances in Neural Information Processing Systems},
  volume    = {30},
  year      = {2017}
}

@inproceedings{tolstikhin2018wae,
  author    = {Tolstikhin, Ilya and Bousquet, Olivier and Gelly, Sylvain
               and Sch\"{o}lkopf, Bernhard},
  title     = {{Wasserstein} Auto-Encoders},
  booktitle = {International Conference on Learning Representations},
  year      = {2018}
}

@inproceedings{chen2018tcvae,
  author    = {Chen, Ricky T. Q. and Li, Xuechen and Grosse, Roger and
               Duvenaud, David},
  title     = {Isolating Sources of Disentanglement in Variational
               Autoencoders},
  booktitle = {Advances in Neural Information Processing Systems},
  volume    = {31},
  year      = {2018}
}

@inproceedings{dinh2017realnvp,
  author    = {Dinh, Laurent and Sohl-Dickstein, Jascha and Bengio, Samy},
  title     = {Density Estimation Using Real-Valued Non-Volume Preserving
               ({Real NVP}) Transformations},
  booktitle = {International Conference on Learning Representations},
  year      = {2017}
}

@inproceedings{ho2020ddpm,
  author    = {Ho, Jonathan and Jain, Ajay and Abbeel, Pieter},
  title     = {Denoising Diffusion Probabilistic Models},
  booktitle = {Advances in Neural Information Processing Systems},
  volume    = {33},
  pages     = {6840--6851},
  year      = {2020}
}

@misc{hu2025flowts,
  author    = {Hu, Yang and Wang, Xiao and Ding, Zezhen and Wu, Lirong
               and Zhang, Huatian and Li, Stan Z. and Wang, Sheng and
               Zhang, Jiheng and Li, Ziyun and Chen, Tianlong},
  title     = {{FLOWTS}: Time Series Generation via Rectified Flow},
  year      = {2025},
  note      = {Preprint. Code: \url{https://github.com/UNITES-Lab/FlowTS}}
}

@article{tian2026admm,
  author  = {Tian, Jiwei and Shen, Chao and Lin, Chenhao and Zhang, Meng and
             Xia, Xiaofang and Ren, Chao and Zhang, Yuqing},
  title   = {{ADMM}-Based Adversarial False Data Injection Attacks Against
             Multi-Label Locational Detection},
  journal = {IEEE Transactions on Dependable and Secure Computing},
  year    = {2026},
  doi     = {10.1109/TDSC.2025.3605689}
}

@article{liu2025spatiotemporal,
  author  = {Liu, Chensheng and Li, Yongyu and Yu, Yang and Wu, Xiaoming and
             Yang, Ming and Tang, Yang and Long, Bo},
  title   = {Spatiotemporal Stealthy Attacks in Power Systems With
             High-Penetrated Renewable Energy Sources},
  journal = {IEEE Internet of Things Journal},
  year    = {2025},
  doi     = {10.1109/JIOT.2025.3612101}
}

@article{liu2025adtrans,
  author  = {Liu, Yulin and Shi, Libao},
  title   = {A Joint {FDIA} Detection and Localization Scheme and Experimental
             Analysis Based on Adversarial Anomaly Transformer for Power
             Distribution System},
  journal = {IEEE Transactions on Industrial Informatics},
  year    = {2025},
  doi     = {10.1109/TII.2025.3609087}
}

@article{qin2026false,
  title={False Data Injection Attacks Against UAV Battery State of Health Estimation with A Capacity-informed Latent Diffusion Model},
  author={Qin, Yan and Luo, Zhiqing and Che, Yunhong and Cao, Liang},
  journal={Green Energy and Intelligent Transportation},
  pages={100404},
  year={2026},
  publisher={Elsevier}
}

@article{li2023sparsegan,
  author    = {Li, Fengyong and Shen, Weicheng and Bi, Zhongqin and
               Su, Xiangjing},
  title     = {Sparse Adversarial Learning for {FDIA} Attack Sample
               Generation in Distributed Smart Grids},
  journal   = {Computer Modeling in Engineering \& Sciences},
  year      = {2023},
  doi       = {10.32604/cmes.2023.044431}
}

@inproceedings{liu2024lsganfdia,
  author    = {Liu, Yichen Henry and Liu, Bo and Liu, Xuebo and Wu, Hongyu},
  title     = {False Data Injection Attacks Based on Least Squares
               Generative Adversarial Networks with Reconstruction Loss},
  booktitle = {Proc.\ IEEE Power \& Energy Society General Meeting (PESGM)},
  year      = {2024},
  pages     = {1--5},
  month     = jul,
  doi       = {10.1109/pesgm51994.2024.10688650}
}

@inproceedings{hong2023tsganfdia,
  author    = {Hong, Sheng and Zhang, Xiao and Kou, Jian and Sun, Yuhao},
  title     = {A Novel False Data Injection Method Targeting on
               Time-series analysis in Smart Grid},
  booktitle = {Proc.\ Int.\ Conf.\ Smart Energy Grid Engineering (SEGE)},
  pages     = {50--55},
  year      = {2023},
  month     = jun,
  doi       = {10.1109/segre58867.2023.00016}
}

@article{huang2022sparseae,
  author    = {Huang, Xiaoge and Qin, Zhijun and Xie, Ming and Liu, Hui
               and Meng, Liang},
  title     = {Defense of Massive False Data Injection Attack via Sparse
               Attack Points Considering Uncertain Topological Changes},
  journal   = {Journal of Modern Power Systems and Clean Energy},
  volume    = {10},
  number    = {6},
  pages     = {1588--1598},
  year      = {2022},
  month     = jan,
  doi       = {10.35833/mpce.2020.000686}
}

@article{nawaz2025gangridfdia,
  author    = {Nawaz, Rehan and Akhtar, Rabbaya and Khan, Saad Ullah and
               Bu, Siqi and Mahmood, M.},
  title     = {Generative Adversarial Networks-based False Data Injection:
               A Concern for Data Integrity of Power Networks},
  journal   = {IEEE Transactions on Power Systems},
  year      = {2025},
  doi       = {10.1109/TPWRS.2025.3575431}
}

@article{li2024advdiffusionfdia,
  author    = {Li, Kongyijie and Li, Fengyong and Wang, Baonan and
               Shan, Meijing},
  title     = {False data injection attack sample generation using an
               adversarial attention-diffusion model in smart grids},
  journal   = {AIMS Energy},
  volume    = {12},
  number    = {6},
  pages     = {1271--1293},
  year      = {2024},
  month     = jan,
  doi       = {10.3934/energy.2024058}
}

@article{zhao2025piexavae,
  author    = {Zhao, Siliang and Luo, Wuman and Shu, Qin and Xu, Fangwei},
  title     = {Controllable Blind {AC} {FDIA} Via Physics-Informed
               Extrapolative {AVAE}},
  journal   = {Sensors},
  volume    = {25},
  number    = {3},
  pages     = {943},
  year      = {2025},
  doi       = {10.3390/s25030943}
}

@article{jing2025physconstrainedgan,
  author    = {Zhang, Jing and Wang, Qi and Hu, Jianxiong and Wu, Shutan
               and Ye, Yujian and Tang, Yujian},
  title     = {Prior-Knowledge-Free Data Tampering on Power System State
               Estimation: A Physics-Constrained Generative Adversarial
               Framework},
  journal   = {IEEE Internet of Things Journal},
  year      = {2025},
  month     = jan,
  doi       = {10.1109/JIOT.2025.3631113}
}

@article{zhang2024piddpm,
  author    = {Zhang, Shaorong and Cheng, Yuanbin and Yu, Nanpeng},
  title     = {Generating Synthetic Net Load Data with
               Physics-Informed Diffusion Model},
  journal   = {IEEE Transactions on Smart Grid},
  year      = {2024},
  doi       = {10.1109/TSG.2024.3371820}
}

@inproceedings{liu2008iforest,
  author    = {Liu, Fei Tony and Ting, Kai Ming and Zhou, Zhi-Hua},
  title     = {Isolation Forest},
  booktitle = {Proceedings of the 8th IEEE International Conference on
               Data Mining (ICDM)},
  pages     = {413--422},
  year      = {2008},
  doi       = {10.1109/ICDM.2008.17}
}

@inproceedings{sundararajan2017ig,
  author    = {Sundararajan, Mukund and Taly, Ankur and Yan, Qiqi},
  title     = {Axiomatic Attribution for Deep Networks},
  booktitle = {Proceedings of the 34th International Conference on
               Machine Learning (ICML)},
  pages     = {3319--3328},
  year      = {2017}
}

@article{xia2024gcat,
  author    = {Xia, Wei and He, Deming and Yu, Lisha},
  title     = {Locational Detection of False Data Injection Attacks in
               Smart Grids: A Graph Convolutional Attention Network
               Approach},
  journal   = {IEEE Internet of Things Journal},
  volume    = {11},
  number    = {6},
  pages     = {10399--10413},
  year      = {2024},
  month     = mar,
  doi       = {10.1109/JIOT.2023.3323565}
}

@article{baul2023xtm,
  author    = {Baul, Avijit and Sarker, Goutam Chandra and
               Sadhu, Pradip Kumar and Yanambaka, Venkata P. and
               Abdelgawad, Ahmed},
  title     = {{XTM}: A Novel Transformer and {LSTM}-Based Model for
               Detection and Localization of Formally Verified {FDI}
               Attack in Smart Grid},
  journal   = {Electronics},
  volume    = {12},
  number    = {4},
  pages     = {797},
  year      = {2023},
  doi       = {10.3390/electronics12040797}
}

@inproceedings{li2020conaml,
  author    = {Li, Jiangnan and Yang, Yingyuan and
               Tomsovic, Kevin and Sun, Jinyuan Stella and Qi, Hairong},
  title     = {{ConAML}: Constrained Adversarial Machine Learning for
               Cyber-Physical Systems},
  booktitle = {Proceedings of the ACM Asia Conference on Computer and
               Communications Security (ASIA CCS)},
  pages     = {52--62},
  year      = {2021},
  doi       = {10.1145/3433210.3437513}
}

@article{mchugh2012interrater,
  title={Interrater reliability: the kappa statistic},
  author={McHugh, Mary L},
  journal={Biochemia medica},
  volume={22},
  number={3},
  pages={276--282},
  year={2012},
  publisher={Hrvatsko dru{\v{s}}tvo za medicinsku biokemiju i laboratorijsku medicinu}
}

@misc{MITRE,
author = {{MITRE}},
  title = {Techniques - Enterprise | MITRE ATT\&CK®},
  url = "https://attack.mitre.org/techniques/enterprise/",
month = {April },
year = {2026},
  note = "[Online; accessed 2026-04-21]"
}

@ARTICLE{FDIA2016,
  author={Deng, Ruilong and Xiao, Gaoxi and Lu, Rongxing and Liang, Hao and Vasilakos, Athanasios V.},
  journal={IEEE Transactions on Industrial Informatics}, 
  title={False Data Injection on State Estimation in Power Systems—Attacks, Impacts, and Defense: A Survey}, 
  year={2017},
  volume={13},
  number={2},
  pages={411-423},
  doi={10.1109/TII.2016.2614396}}

\end{document}